\newcommand{\be}{\begin{eqnarray}}
\newcommand{\ee}{\end{eqnarray}}
\newcommand{\ben}{\begin{eqnarray*}}
\newcommand{\een}{\end{eqnarray*}}
\numberwithin{equation}{section}
 \DeclareMathOperator{\diam}{diam}
\newcommand{\R}{{\mathbf R}}
\def\R{{\mathbb{R}}}
\def\HH{{\mathcal{H}_d}}
\def\R{{\mathbb{R}}}
\def\d{{m}}
\def\E{{\mathcal{E}}}
\newtheorem{theorem}{Theorem}
\newtheorem{corollary}[theorem]{Corollary}
\newtheorem{lemma}[theorem]{Lemma}
\newtheorem*{theorem*}{Theorem}
\newtheorem*{remark*}{Remark}
 \title[Quasi-uniformity of Minimal Weighted Energy Points ]{Quasi-uniformity of Minimal Weighted Energy Points on Compact Metric Spaces}
\author{D. P. Hardin,   E. B. Saff, and J. T. Whitehouse} 
\thanks{\noindent The research of all authors was supported, in part, by the U. S. National Science Foundation under grants DMS-0808093 and DMS-1109266. }
\date{\today}
\address{D. P. Hardin, E. B. Saff, and J. T. Whitehouse:
Center for Constructive Approximation,
Department of Mathematics,
Vanderbilt University,
Nashville, TN 37240,
USA }
\email{Doug.Hardin@Vanderbilt.Edu}
\email{Edward.B.Saff@Vanderbilt.Edu}
\email{Tyler.Whitehouse@gmail.com}
\keywords{Fill radius, Mesh-separation ratio, Best-packing, Optimal configurations, Covering radius, Minimal Riesz energy, Quasi-uniformity, Separation distance}
\subjclass[2000]{Primary: 31C20, 65N50, 57N16; Secondary: 52A40, 28A78 }
\begin{document}
 \begin{abstract}
 For a closed subset $K$ of a compact metric space $A$ possessing
 an $\alpha$-regular measure $\mu$ with $\mu(K)>0$, we prove that  whenever $s>\alpha$,
 any sequence of weighted minimal Riesz $s$-energy configurations
 $\omega_N=\{x_{i,N}^{(s)}\}_{i=1}^N$ on $K$ (for `nice' weights) is quasi-uniform  in the sense that
 the ratios of its mesh norm to separation distance remain bounded as $N$ grows large. Furthermore, if $K$ is  an $\alpha$-rectifiable compact subset of Euclidean space ($\alpha$ an integer)  with positive and finite $\alpha$-dimensional Hausdorff measure, it is possible to generate  such a quasi-uniform sequence of configurations
 that also has (as $N\to \infty$) a prescribed positive continuous
 limit distribution with respect to $\alpha$-dimensional Hausdorff measure.  \end{abstract}
 \maketitle

\section{Introduction}

Let $A$ be a compact infinite metric space with metric $\d: A\times A\rightarrow[0,\infty)$ and  let $\omega_N=\{x_i\}_{i=1}^N\subset  A$ denote a configuration of $N\ge 2$ points in $A$.
We are chiefly concerned with  two `quality' measures of $\omega_N$; namely, the {\em separation distance of $\omega_N$}   defined by
\begin{equation}\delta(\omega_N):=\min_{1\leq i\not= j\leq N}\d(x_i,x_j),
\end{equation}
and    the   {\em mesh norm}  {\em of $\omega_N$ with respect to   $A$}   defined by
 \begin{equation}\rho(\omega_N,A):=\max_{y\in A}\min_{1\leq i\leq N}\d(y,x_i).
 \end{equation}
 This quantity is also known as the  {\em fill radius} or {\em covering radius} of $\omega_N$ relative to $A$.
The optimal values of these quantities are also of interest and we consider, for $N\ge 2$,  the {\em $N$-point best-packing
 distance on $A$} given by $$\delta_N(A):=\max \{\delta(\omega_N)\colon \omega_N\subset  A,\, |\omega_N|=N\},$$
 and the {\em $N$-point mesh norm}  of  $A$  given by
 $$\rho_N(A):=\min  \{\rho(\omega_N,A)\colon \omega_N\subset  A,\, |\omega_N|=N\},$$
 where $|S|$ denotes the cardinality of set $S$.

  In the theory of approximation and interpolation (for example, by splines or radial basis functions (RBFs)), the separation distance is often associated with some measure of `stability' of the approximation, while the mesh norm arises in the  error of the approximation. In this context, the  {\em mesh-separation ratio} (or {\em mesh ratio}) $$\gamma(\omega_N,A):=\rho(\omega_N,A)/\delta(\omega_N),$$
 can be regarded as a `condition number' for $\omega_N$ relative to $A$.   If $\{\omega_N\}_{N=2}^\infty$ is a sequence of $N$-point configurations such that $\gamma(\omega_N,A)$ is uniformly bounded in $N$, then the sequence is said to be {\em quasi-uniform on $A$}.
    Quasi-uniform sequences of configurations are important for a number of methods involving RBF approximation and interpolation (see \cite{ FW, LGSW, P, S}).

 We remark that in some cases it is easy to obtain positive lower bounds for the mesh-separation ratio.  For example, if $A$ is connected, then $\gamma(\omega_N,A)\ge 1/2$.   Furthermore, letting
 $$B(x,r)=\{y\in A: m(y,x)\leq r\}$$ be the closed ball in $A$ with center $x$ and radius $r$,
  then
 $\gamma(\omega_N,A)\ge \beta/2$  for any $N$-point configuration $\omega_N\subset  A$ whenever $A$ and $\beta\in(0,1)$ have the property that for any $r\in (0,\diam (A)]$ and any $x\in A$, the annulus $B(x,r)\setminus B(x ,\beta r)$
 is nonempty.     The diameter of $A$ is defined by $$\diam(A):=\max\{m(x,y)\colon x\in A,\ y\in A\}.$$

  In this paper we consider the separation distance and mesh norm of finite point configurations in $A$
  that   minimize certain weighted energy  functionals.  We call  $w:A\times A\rightarrow[0,\infty)$ an  {\em SLP weight on $A$} if it is  symmetric and lower semi-continuous on $A\times A$ and is positive on the diagonal, $D(A)$, of $A\times A$.
For  $s>0$ and  a collection of $N\ge 2$ distinct points $\omega_N=\{x_1,\ldots,x_N\}\subset  A$,  the {\em $(s,w)$-energy  of $\omega_N$} (also known as the {\em weighted Riesz $s$-energy}) is
\begin{equation}\label{weighted}E_{s}^{w}(\omega_N):=\sum_{i\not=j}\frac{w(x_i,x_j)}{\d(x_i,x_j)^s}=
\sum_{i=1}^N\sum_{\substack{j=1\\j\neq i}}^N\frac{w(x_i,x_j)}{\d(x_i,x_j)^s},
\end{equation}
and  we denote the {\em minimal $N$-point $(s,w)$-energy of  $A$} by
\begin{equation}\label{mins}\E_{s}^{w}(N,A):= \inf\{E_{s}^{w}(\omega_N):\omega_N\subset  A, \, |\omega_N|=N\}.
\end{equation}
Since $A$ is compact and the energy $E_{s}^{w}(\omega_N)$ is lower semi-continuous, there exists at least one $N$-point configuration $\omega_N^*\subset  A$ such that $E_{s}^{w}(\omega_N^*)=\E_{s}^{w}(N,A)$.  We refer to such an $\omega_N^*$ as an {\em $N$-point  $(s,w)$-energy minimizing configuration on $A$}.  The asymptotics  as $N\to \infty$
of  $N$-point $(s,w)$-energy minimizing configurations
and their energies  are investigated in \cite{transactions,advances} for $d$-rectifiable sets $A\subset  \R^p$ and $s>d$ (see further discussion in the next section).

In our results we shall require that  $A$  is either $\alpha$-regular or upper $\alpha$-regular as we next describe.   For a positive Borel measure $\mu$ supported on $A$ and $\alpha>0$, we say that $\mu$ is \emph{upper $\alpha$-regular} if there is some finite  constant $C_0$ such that%
\begin{equation}\label{upregularity} \mu(B(x,r))\leq C_0 \,r^\alpha \qquad (x\in A,\, 0<r\leq\diam(A)),\end{equation}
  and we say that $\mu$ is \emph{lower $\alpha$-regular} if there is some  positive constant $c_0$ such that
  \begin{equation}\label{lowregularity}
  c_0^{-1} \,r^\alpha\le \mu(B(x,r))  \qquad (x\in A,\, 0<r\leq\diam(A)).
  \end{equation}
   We shall refer to $A$ as an   \emph{upper $\alpha$-regular metric space}  if there exists
an upper   $\alpha$-regular measure $\bar{\mu}$ on $A$ such that $\bar{\mu}(A)>0$ and shall refer to $A$ as a \emph{lower $\alpha$-regular metric space}  if there exists
a lower  $\alpha$-regular measure $\underline{\mu}$ on $A$ such that $\underline{\mu}(A)<\infty$. (Obviously, if $A$ is upper $\alpha$-regular then $A$ has infinitely many points.)
If $A$ supports a measure that is both upper and lower $\alpha$-regular, then we say that $A$ is an \emph{$\alpha$-regular metric space}.
If $A$ is $\alpha$-regular, then it is not difficult to show that the Hausdorff dimension of
$A$,  $\dim_\mathcal{H}A$, equals $\alpha$ (cf. \cite{Heinonen,M}).  Furthermore,  the $\alpha$-dimensional Hausdorff measure of $A$, $\mathcal{H}_\alpha(A)$,
is positive and finite.

Many of the constants appearing in this paper,   either explicitly or implicitly involve the upper and lower regularity constants $C_0$ and $c_0$ appearing in  \eqref{upregularity} and \eqref{lowregularity}.  However, in certain cases we are interested in `local' regularity estimates (i.e., for $r$ small) which can substantially improve our explicit estimates for particular metric spaces of interest (e.g., $A$ is the sphere $S^d$ with the Euclidean metric).  Specifically, if $\bar{\mu}$ is an upper $\alpha$-regular measure, $\underline{\mu}$ is a lower $\alpha$-regular measure and $r^*>0$, we define
\begin{align}\label{localreg}
\begin{split}
C_0(r^*)&:=\sup \{\bar{\mu}(B(x,r))/r^\alpha\colon x\in A, \, 0<r\le r^*\},\\
c_0(r^*)^{-1}&:=\inf \{\underline{\mu}(B(x,r))/r^\alpha\colon x\in A, \, 0<r\le r^*\}.
\end{split}
\end{align}We note that both $C_0(r^*)$ and $c_0(r^*)$ are increasing in $r^*$, and we make the definitions
\begin{align}\label{localreg-0}
\begin{split}
C_0(0)&:=\lim_{r^*\to 0^+}C_0(r^*),\\
c_0(0)&:=\lim_{r^*\to 0^+}c_0(r^*).
\end{split}
\end{align}Furthermore, if $A$ is a compact (i.e., without boundary), $C^1$, $d$-dimensional manifold and $\mu=\mathcal{H}_d$, then $C_0(0)\cdot c_0(0)=1$.
For the largest length scale of interest, with a slight abuse of notation,   the global  constants for $\bar{\mu}$ and $\underline{\mu}$, respectively, are related by $C_0=C_0(\diam(A))$ and $c_0=c_0(\diam(A))$.

One may obtain simple upper bounds  for $\delta_N(A)$ (respectively, lower bounds for $\rho_N(A)$)  in the case that  $A$ is   lower (respectively, upper) $\alpha$-regular.  Specifically, if $A$ is lower $\alpha$-regular then  there is a constant $c_A<\infty$ such that
 \begin{equation}\label{seplowbnd}
 \delta_N(A)\le c_AN^{-1/\alpha},\qquad (N\ge 2),
 \end{equation}
while if $A$ is upper $\alpha$-regular then  there is a constant $\tilde c_A>0$ such that
\begin{equation}\label{meshupbnd}\rho_N(A)\ge \tilde c_AN^{-1/\alpha},\qquad (N\ge 2).
 \end{equation}
  The bound \eqref{seplowbnd} is a consequence of the facts   that the balls $\{B(x,\delta (\omega_N)/2)\colon x\in\omega_N\}$   are pairwise disjoint and  that there exists a  lower $\alpha$-regular measure $\underline{\mu}$ with   $\underline{\mu}(A)<\infty.$  Similarly, if $A$ is upper $\alpha$-regular, then the bound \eqref{meshupbnd} follows from the covering property of  the balls $\{B(x,\rho (\omega_N,A))\colon x\in\omega_N\}$ and the existence of an upper $\alpha$-regular measure $\bar{\mu}$ with   $\bar{\mu}(A)>0.$

 The main result of this paper, given in Theorem~\ref{mainResult}, is that a sequence of $N$-point $(s,w)$-energy minimizing configurations on an $\alpha$-regular compact metric space $A$ is quasi-uniform on $A$  whenever $s>\alpha$.  As an application, we deduce that, if $A\subset  \R^p$ is $d$-rectifiable for some integer $0< d\le p$
with $\mathcal{H}_d(A)>0$, then a quasi-uniform sequence of $N$-point configurations on $A$ can be found that has a prescribed  bounded positive density on $A$ (see Corollary~\ref{wrho} and the discussion preceding it).

\section{Main Results \label{mainResults}}

We first consider the separation distance of $(s,w)$-energy minimizing configurations on an upper $\alpha$-regular compact metric space $A$.  For these separation results, we consider symmetric weight functions $w$ such that  $\|w(\cdot,x)\|_{L_{p}(\mu)}$ is uniformly bounded on $A$ for some $1<p\leq\infty$.  Here we use the standard notation,
$$\|f\|_{L_{p}(\mu)}:=\begin{cases}\left(\int_A |f|^{p}\,d \mu\right)^{1/{p}},  & 1\leq p<\infty  ,\\    \text{$\mu$-ess  sup $|f|$}, &   p=\infty,
\end{cases}
$$
where $\mu$ is a positive Borel measure and $f$ is a Borel measurable function on $A$.

The following theorem extends a result~\cite[Theorem~4]{transactions} to a more general class of weight functions and to more general compact metric spaces.
\begin{theorem}\label{separation} Let  $A$ be a    compact,  upper $\alpha$-regular metric space with respect to   $\bar \mu$  and let $w$ be an SLP weight on $A$ such that
$\|w(\cdot,x)\|_{L_{p_0}(\bar\mu)}$ is uniformly bounded on $A$ for some $1<p_0\leq\infty$.  Suppose
$1<p\leq p_0$,  $s>\alpha(1-1/p)$, and  $N\ge 2$.  If  $\omega_N^*$ is an $N$-point $(s,w)$-energy minimizing configuration on $A$,
then
\begin{equation}\label{prop-sep}\delta(\omega_N^*)\geq C_1\, N^{-\left(\frac{1}{\alpha} +\frac{1}{sp}\right)} \qquad (N\ge 2),
\end{equation}where $C_1$ is a constant independent of $N$ indicated below in~\eqref{C1def}.
 \end{theorem}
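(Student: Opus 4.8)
The plan is to bound the energy of the minimizer $\omega_N^*$ from above by a competitor and from below by an expression that isolates the contribution of the closest pair of points, then compare. For the upper bound, I would use the estimate $\mathcal{E}_s^w(N,A)\le c N^{1+s/\alpha}$ for $s>\alpha(1-1/p)\ge 0$; since $s$ may be less than $\alpha$, this needs a separate argument rather than citing the rectifiable-set results, but it follows from a standard construction: pick a well-separated $N$-point configuration using the lower bound on $\delta_N(A)$ (which requires lower $\alpha$-regularity; note upper regularity alone may not suffice, so I would either assume $A$ is $\alpha$-regular here or, following \cite{transactions}, extract the competitor from a maximal $r$-separated set and use upper regularity of $\bar\mu$ to control how energy accumulates in dyadic annuli around each point — the sum $\sum_k (2^{-k}r)^{-s}\cdot\#\{j: m(x_i,x_j)\approx 2^{-k}r\}$ converges because $s>0$ and the annulus counts are $O((2^{-k}r/\delta)^\alpha)$). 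Either way one obtains $\mathcal{E}_s^w(N,A)=O(N^{1+s/\alpha})$ with a constant depending on $C_0$ and $\sup_x\|w(\cdot,x)\|_{L_{p_0}}$.

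For the lower bound, suppose the minimum distance $\delta=\delta(\omega_N^*)$ is achieved at the pair $x_1,x_2$. The term $w(x_1,x_2)/\delta^s$ alone gives a contribution of order $\delta^{-s}$, but $w$ is only lower semi-continuous and positive on the diagonal, so near $x_1=x_2$ we have a positive lower bound on $w$ only after restricting to a neighborhood; I would fix this by using lower semi-continuity together with positivity on $D(A)$ and compactness of $A$ to get $w(x,y)\ge \varepsilon>0$ whenever $m(x,y)\le \eta$ for suitable $\varepsilon,\eta$ — but since $\delta\to 0$, this is automatic for large $N$. The key refinement is that we cannot just use one term: the point is that summing the contributions from $x_1$ (or $x_2$) to all other points, weighted by $w$, must not exceed the total energy. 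So I would estimate $\sum_{j\ne 1} w(x_1,x_j)/m(x_1,x_j)^s$ from below. Splitting the sum by dyadic shells $\{j: 2^{k}\delta\le m(x_1,x_j)< 2^{k+1}\delta\}$ and applying Hölder's inequality with exponents $p$ and $p'$ to the weighted sum — bounding $\sum_j w(x_1,x_j)^{p'}$-type quantities by the $L_p$ norm of $w(\cdot,x_1)$ via an integral against $\bar\mu$ controlled by upper regularity — produces a lower bound of the form $c\,\delta^{-s}$ minus a term of order $\delta^{-s}\cdot(\text{shell counts})^{1/p}\cdot(\text{volume})^{1/p'}$.

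Combining, the minimality of $\omega_N^*$ forces $\delta^{-s}\lesssim \mathcal{E}_s^w(N,A)\lesssim N^{1+s/\alpha}$ after the Hölder bookkeeping distributes a factor $N^{1/p}$ appropriately; solving for $\delta$ yields $\delta\ge C_1 N^{-(1/\alpha+1/(sp))}$. Concretely: the Hölder step trades an $N^{1/p}$ loss against an $N^{-1/(sp)}$ gain in the exponent, which is exactly where the $1/(sp)$ correction in \eqref{prop-sep} comes from — that is the characteristic signature of an $L_p$ (rather than $L_\infty$) weight hypothesis, and $p=\infty$ recovers the clean exponent $1/\alpha$ of the unweighted case. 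I expect the main obstacle to be the Hölder estimate on $\sum_{j\ne 1} w(x_1,x_j)/m(x_1,x_j)^s$: one must convert a sum over the finite configuration into an integral against $\bar\mu$ so that $\|w(\cdot,x_1)\|_{L_p(\bar\mu)}$ can be invoked, and this requires comparing $\#\{j: m(x_1,x_j)\in[\delta r, 2\delta r)\}$ to $\bar\mu(B(x_1,2\delta r))/\bar\mu(B(x_j,\delta/2))$-type ratios, i.e. a Besicovitch/volume-packing argument using $\bar\mu$ both ways — which is delicate precisely because $\bar\mu$ is only assumed upper regular, so the lower volume bound on small balls $B(x_j,\delta/2)$ must come from the separation of the configuration itself rather than from $\bar\mu$. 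Tracking all constants through this step gives the explicit $C_1$ referenced in \eqref{C1def}.
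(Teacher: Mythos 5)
Your proposal has a genuine gap, and it is structural rather than a matter of missing constants. The comparison you set up, $\delta^{-s}\lesssim \mathcal{E}_s^w(N,A)\lesssim N^{1+s/\alpha}$, can only ever yield $\delta(\omega_N^*)\gtrsim N^{-(1/s+1/\alpha)}$, which is strictly weaker than the claimed $N^{-(1/\alpha+1/(sp))}$ whenever $p>1$. To reach the stated exponent you must bound the \emph{single-row} potential $U_i(x):=\sum_{j\ne i}w(x,x_j)\,m(x,x_j)^{-s}$ at the point $x_i$ from \emph{above} by $c\,N^{s/\alpha+1/p}$ (note: above, not below as you write --- the lower bound $U_{i}(x_i)\ge \eta\,\delta^{-s}$ is the trivial direction). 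An individual row sum is not controlled by dividing the total energy by $N$, so no global energy comparison can deliver this. The ingredient your proposal never invokes is the variational inequality for minimizers: since moving $x_i$ to any $x\in A$ cannot decrease the energy, $U_i(x_i)\le U_i(x)$ for every $x\in A$, hence $U_i(x_i)$ is at most the average of $U_i$ over any set of positive $\bar\mu$-measure. The paper averages over $D=\bigcup_j \bigl(B(x_j,r_1)\setminus B(x_j,r_0)\bigr)$ with $r_0\sim (\theta\bar\mu(A)/(NC_0))^{1/\alpha}$, applies H\"older termwise with exponents $p,q$, and estimates $\int_{B(x_j,r_0,r_1)} m(x,x_j)^{-sq}\,d\bar\mu$ by the distribution function using upper regularity and $sq>\alpha$; summing the $N-1$ terms produces exactly the factor $N^{1/p+s/\alpha}$. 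This is where the $1/(sp)$ in the exponent actually comes from --- not from "distributing an $N^{1/p}$ loss" in an energy comparison.

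A second, independent problem is that both halves of your scheme lean on lower $\alpha$-regularity, which the theorem does not assume. The competitor bound $\mathcal{E}_s^w(N,A)\lesssim N^{1+s/\alpha}$ via dyadic shell counts needs a lower volume bound on small balls to convert cardinalities into $\bar\mu$-measures, and separation of the configuration does not supply one (under upper regularity alone, small balls may have arbitrarily small or zero $\bar\mu$-measure); you flag this difficulty yourself but do not resolve it. The averaging argument sidesteps all of this: it needs no competitor configuration, no point counting, and no lower regularity --- only that the excised set $D$ has measure at least $(1-\theta)\bar\mu(A)>0$, which follows from upper regularity of the removed balls $B(x_j,r_0)$.
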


Taking $w$ bounded and setting $p=\infty$ in Theorem~\ref{separation}   produces the following result.
 \begin{corollary}\label{seps}Suppose  $A$ is a  compact,   upper $\alpha$-regular   metric space and  $w $ is a bounded SLP weight on $A$, and let  $s>\alpha$.  If $\omega^*_N$ is an $N$-point $(s,w)$-energy minimizing  configuration on $A$, then
\begin{equation}\label{seps1}
 \delta(\omega^*_N)\ge C_2N^{-1/\alpha} \qquad (N\ge 2),
 \end{equation}where $C_2$ is a constant independent of $N$.
 Consequently,
 \begin{equation}\label{sep-const}
 \delta_N(A)\geq C_2 N^{-1/\alpha}\qquad (N\ge 2).
 \end{equation}

 For the unweighted case $w\equiv 1$, the constant $C_2$ satisfies
 \begin{equation}
 \label{C2bnd} C_2\ge \left[\frac{\bar{\mu}(A)}{C_0}\left (1-\frac{\alpha}{s}\right)\right]^{1/\alpha}\left(\frac{\alpha}{s}\right)^{1/s},
 \end{equation}
where $C_0=C_0(\diam(A))$.
 \end{corollary}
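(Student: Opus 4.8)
The first two assertions are immediate from Theorem~\ref{separation}. A bounded SLP weight $w$ satisfies $\|w(\cdot,x)\|_{L_\infty(\bar\mu)}\le \sup w<\infty$ uniformly in $x$, so the plan is to take $p_0=p=\infty$ there: the hypothesis $s>\alpha(1-1/p)$ then reads $s>\alpha$, and the exponent $-(1/\alpha+1/(sp))$ in \eqref{prop-sep} collapses to $-1/\alpha$, giving \eqref{seps1} with $C_2$ the $p=\infty$ value of $C_1$. Inequality \eqref{sep-const} follows at once, since $\omega_N^*$ is one admissible configuration and hence $\delta_N(A)=\max_{\omega_N}\delta(\omega_N)\ge \delta(\omega_N^*)\ge C_2N^{-1/\alpha}$.

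It remains to verify the explicit bound \eqref{C2bnd} for $w\equiv1$, and for this I would give a direct variational argument rather than track the constant through the proof of Theorem~\ref{separation}. Write $\beta:=\alpha/s\in(0,1)$, let $\omega_N^*=\{x_1,\dots,x_N\}$ be a minimizer, and label the points so that $m(x_1,x_2)=\delta(\omega_N^*)=:\delta$. For $y\in A$ set $\Phi(y):=\sum_{j\ge2}m(y,x_j)^{-s}$. Minimality means that relocating $x_1$ to any $y$ changes the energy by $2(\Phi(y)-\Phi(x_1))\ge0$, so $x_1$ minimizes $\Phi$ on $A$; hence
\begin{equation*}
\delta^{-s}\le \sum_{j\ge2}m(x_1,x_j)^{-s}=\Phi(x_1)=\min_{y\in A}\Phi(y).
\end{equation*}
The task is thus to bound $\min_{y}\Phi(y)$ from above, and the essential difficulty is that for $s>\alpha$ the Riesz kernel is not $\bar\mu$-integrable near the $x_j$, so one cannot simply average $\Phi$ over all of $A$.

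To circumvent this I would average $\Phi$ over the punctured set $S_\rho:=A\setminus\bigcup_{j\ge2}B(x_j,\rho)$, on which each term $m(\cdot,x_j)^{-s}$ is bounded by $\rho^{-s}$. Since $\min_{A}\Phi\le \bar\mu(S_\rho)^{-1}\int_{S_\rho}\Phi\,d\bar\mu$ and $S_\rho\subseteq\{m(\cdot,x_j)>\rho\}$ for each $j$, a layer-cake computation using only upper $\alpha$-regularity (in the form $\bar\mu(B(x,r))\le C_0r^\alpha$, which holds for \emph{all} $r>0$ once $C_0=C_0(\diam A)$, because $C_0\diam(A)^\alpha\ge\bar\mu(A)$) gives
\begin{equation*}
\int_{\{m(\cdot,x_j)>\rho\}}m(y,x_j)^{-s}\,d\bar\mu(y)\le\int_0^{\rho^{-s}}C_0\,t^{-\beta}\,dt=\frac{C_0}{1-\beta}\,\rho^{\alpha-s},
\end{equation*}
while upper regularity also yields $\bar\mu(S_\rho)\ge\bar\mu(A)-(N-1)C_0\rho^\alpha$. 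Combining these,
\begin{equation*}
\delta^{-s}\le\frac{(N-1)\,\frac{C_0}{1-\beta}\,\rho^{\alpha-s}}{\bar\mu(A)-(N-1)C_0\rho^\alpha}
\end{equation*}
for every $\rho$ with $(N-1)C_0\rho^\alpha<\bar\mu(A)$. The final step is to optimize the right-hand side over $\rho$: a short calculation shows the minimum occurs at $(N-1)C_0\rho^\alpha=(1-\beta)\bar\mu(A)$, where the denominator equals $\beta\bar\mu(A)$ and the bound simplifies to $\delta^{-s}\le\beta^{-1}\big(\frac{(N-1)C_0}{(1-\beta)\bar\mu(A)}\big)^{s/\alpha}$. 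Solving for $\delta$ and using $N-1<N$ produces exactly $\delta\ge C_2N^{-1/\alpha}$ with $C_2=(\alpha/s)^{1/s}\big[\tfrac{\bar\mu(A)}{C_0}(1-\tfrac{\alpha}{s})\big]^{1/\alpha}$, as in \eqref{C2bnd}. The one genuine obstacle is the non-integrability of the kernel; once it is handled by this excision-and-optimization device, the stated constant emerges with no input beyond the single regularity constant $C_0(\diam A)$.
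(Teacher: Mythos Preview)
Your proposal is correct and follows essentially the same route as the paper. For \eqref{seps1}--\eqref{sep-const} the paper likewise just sets $p=\infty$ in Theorem~\ref{separation}; for the explicit bound \eqref{C2bnd} the paper tracks $C_1$ through the proof of Theorem~\ref{separation} with $w\equiv1$, $p=\infty$, $\eta=1$, $\kappa=\infty$, and your ``direct'' argument is precisely that specialization written out afresh---the same excision of balls (your $\rho$ is the paper's $r_0$), the same layer-cake estimate, and the same optimization (your choice $(N-1)C_0\rho^\alpha=(1-\alpha/s)\bar\mu(A)$ corresponds to the paper's $\theta_0=1-\alpha/s$).
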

  We note that if $A$ in Corollary~\ref{seps} is $\alpha$-regular, then by inequality~\eqref{seplowbnd} we see that   $N$-point $(s,w)$-energy minimizing  configurations on $A$
 have the best possible order of separation as $N\to \infty$.

With respect to the separation constant of~\eqref{C2bnd}, if $d\geq 2$ and  $A=\mathbb{S}^d$ with $
 \sigma_d$ denoting the uniform probability distribution on $\mathbb{S}^d,$ then we can get an
 explicit lower bound for $C_2$ by calculating the regularity constant $C_0.$ As stated in ~\cite{kuij}, for $x\in\mathbb{S}^d$,  $0\leq r\leq 2$, and
\begin{equation}\label{gamma-d}\gamma_d :=\frac{\Gamma\left(\frac{d+1}{2}\right)}{\Gamma(d/2)\Gamma(1/2)},
\end{equation}
there holds
 $$\sigma_d(r):=\sigma_d(B(x,r))=\gamma_d\int^1_{1-r^2/2}(1-t^2)^{d/2-1}dt$$ from which it follows that
 $$\sigma_d(r)\leq \frac{\gamma_d}{d}r^d,$$
and, as  $r\to0^+,$
 $$ \sigma_d(r)=\frac{\gamma_d}{d}r^d+\mathcal{O}(r^{d+2}).$$
Therefore, for the uniform probability distribution on $\mathbb{S}^d$, the global upper regularity constant is
 \begin{equation}\label{C-0-uniform}C_0=\sup_{0<r\leq 2}\frac{\sigma_d(r)}{r^d}=\frac{\gamma_d}{d},\end{equation}
%
%
and when applied to~\eqref{C2bnd} we obtain
\begin{equation}\label{C2bnd-unif}C_2\geq \left(\frac{d}{\gamma_d}\right)^{1/d}\left(1-\frac{d}{s}\right)^{1/d}\left(\frac{d}{s}\right)^{1/s}.
\end{equation}
With this lower bound for $C_2$,~\eqref{seps1} becomes
\begin{equation}\label{sep-uniform}
 \delta(\omega^*_N)\ge \left(\frac{d}{\gamma_d}\right)^{1/d}\left(1-\frac{d}{s}\right)^{1/d}\left(\frac{d}{s}\right)^{1/s}N^{-1/d} \qquad (N\ge 2,\,s>d),
\end{equation} and, on letting $s \to \infty,$ we deduce for the $N$-point best-packing distance
 $$\delta_N(\mathbb{S}^d)\geq  \left(\frac{d}{\gamma_d}\right)^{1/d}N^{-1/d} \qquad (N\ge 2,\,s>d).
 $$
 A less explicit lower bound for the separation constant of minimal energy points for $s>d$ on $\mathbb{S}^d$ was obtained in~\cite[Corollary 4]{kuij}.

%


We next consider the mesh norm of $(s,w)$-energy minimizing configurations on an $\alpha$-regular compact metric space $A$. In this case we require that the weight function $w$ be bounded.

\begin{theorem}\label{let} Let $A$ be a compact, $\alpha$-regular  metric space
with   respect to the measure $\mu$ and   $K\subset  A$  be  a compact set of positive $\mu$-measure. Let  $w$ be  a bounded SLP weight on $K$.  If  $s>\alpha$
and  $\omega_N^*$ is an $N$-point $(s,w)$-energy minimizing configuration on $K$,
then
 \begin{equation}\label{mesh-1}\rho(\omega_N^*,K)\leq C_3\, N^{-1/\alpha}\qquad (N\ge 2),
\end{equation}where $C_3$ is a constant independent of $N$ given below in~\eqref{C3def}.
\end{theorem}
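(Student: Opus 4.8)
The plan is to combine two features of an $(s,w)$-energy minimizer: each point of $\omega_N^*$ sits at a minimum of the potential generated by the other points, and $\E_s^w(N,K)$ is at least of order $N^{1+s/\alpha}$. Write $\rho:=\rho(\omega_N^*,K)$ and $\delta:=\delta(\omega_N^*)$. Since $K$ is upper $\alpha$-regular with respect to $\mu|_K$ (because $\mu(K\cap B(x,r))\le\mu(B(x,r))\le C_0r^\alpha$), Corollary~\ref{seps} applies on $K$ and furnishes a constant $C_2>0$ with $\delta\ge C_2N^{-1/\alpha}$. First I would dispose of the case $\rho\le\delta$: since $A$ is lower $\alpha$-regular, \eqref{seplowbnd} gives $\rho\le\delta\le\delta_N(K)\le\delta_N(A)\le c_AN^{-1/\alpha}$, which is \eqref{mesh-1}. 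So assume $\rho>\delta$, and by compactness fix $y^*\in K$ with $\dist(y^*,\omega_N^*)=\rho$, so $\d(y^*,x_i)\ge\rho$ for all $i$.

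For $1\le j\le N$ set $U_j:=\sum_{i\ne j}w(x_j,x_i)\d(x_j,x_i)^{-s}$ and $P_j(y):=\sum_{i\ne j}w(y,x_i)\d(y,x_i)^{-s}$ for $y\in K$, so that $U_j=P_j(x_j)$ and $\sum_{j=1}^N U_j=E_s^w(\omega_N^*)=\E_s^w(N,K)$. As $\omega_N^*$ is energy minimizing, freezing all points but $x_j$ shows $x_j$ minimizes $P_j$ over $K$; hence $P_j(y^*)\ge U_j$ for each $j$. Summing over $j$ and using $\sum_j P_j(y^*)=(N-1)\sum_i w(y^*,x_i)\d(y^*,x_i)^{-s}$ yields
\[
(N-1)\,V\ \ge\ \E_s^w(N,K),\qquad V:=\sum_{i=1}^N\frac{w(y^*,x_i)}{\d(y^*,x_i)^s}.
\]
I would then invoke the lower bound $\E_s^w(N,K)\ge c_1N^{1+s/\alpha}$ for $s>\alpha$: since $K\subset A$ and $A$ admits at most $O(t^{-\alpha})$ points that are $t$-separated, a pigeonhole and convexity argument forces any $N$-point configuration in $K$ to contain $\gtrsim N^2t^\alpha$ pairs at distance at most $2t$, and taking $t\asymp N^{-1/\alpha}$ together with the bound $w\ge w_0>0$ near $D(K)$ (valid by lower semicontinuity and positivity of $w$ on $D(K)$) gives the claim (cf.\ \cite{transactions}). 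In particular $V\gtrsim N^{s/\alpha}$.

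The heart of the proof is a matching upper bound for $V$, expressing that $y^*$ sits in a large empty ball. Let $M:=\sup_{K\times K}w<\infty$, so $V\le M\sum_i\d(y^*,x_i)^{-s}$. The balls $\{B(x_i,\delta/2)\}_{i=1}^N$ are pairwise disjoint in $A$, so combining the lower and upper $\alpha$-regularity of $\mu$,
\[
\#\{i:\d(y^*,x_i)\le R\}\ \le\ \frac{\mu(B(y^*,R+\delta/2))}{c_0^{-1}(\delta/2)^\alpha}\ \le\ C_0c_0\Bigl(\frac{2R}{\delta}+1\Bigr)^\alpha\qquad(R>0).
\]
Splitting $\sum_i\d(y^*,x_i)^{-s}$ over the dyadic shells $\{i:2^\ell\rho\le\d(y^*,x_i)<2^{\ell+1}\rho\}$, $\ell\ge0$, using $2^{\ell+1}\rho>\delta$ (this is where $\rho>\delta$ is used) to absorb the $+1$, and summing $\sum_{\ell\ge0}2^{\ell(\alpha-s)}=(1-2^{\alpha-s})^{-1}$ (this is where $s>\alpha$ is used), I obtain $\sum_i\d(y^*,x_i)^{-s}\le C_4\,\rho^{\alpha-s}\delta^{-\alpha}$ with $C_4=C_0c_02^{3\alpha}/(1-2^{\alpha-s})$; hence, since $\delta\ge C_2N^{-1/\alpha}$,
\[
V\ \le\ MC_4C_2^{-\alpha}\,\rho^{\alpha-s}\,N .
\]

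Finally, comparing the two estimates, $c_1N^{1+s/\alpha}\le\E_s^w(N,K)\le(N-1)V\le MC_4C_2^{-\alpha}\rho^{\alpha-s}N(N-1)$, so for $N\ge2$ we get $\rho^{\alpha-s}\ge\bigl(c_1 C_2^\alpha/(M C_4)\bigr)\,N^{s/\alpha-1}$; since $\alpha-s<0$ and $(s/\alpha-1)/(\alpha-s)=-1/\alpha$, raising to the power $1/(\alpha-s)$ reverses the inequality and gives $\rho\le\bigl(M C_4/(c_1 C_2^\alpha)\bigr)^{1/(s-\alpha)}N^{-1/\alpha}$. Together with the case $\rho\le\delta$ this proves \eqref{mesh-1} with $C_3:=\max\bigl\{c_A,\,\bigl(M C_4/(c_1 C_2^\alpha)\bigr)^{1/(s-\alpha)}\bigr\}$. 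I expect the shell estimate for $V$ to be the main obstacle: it is precisely the interplay between the sharp separation bound $\delta(\omega_N^*)\gtrsim N^{-1/\alpha}$ (bounding how many of the $N$ points can crowd near the hole at $y^*$) and the convergence of the dyadic sum for $s>\alpha$ that makes the exponents in the final comparison conspire to produce the optimal rate $N^{-1/\alpha}$; the crude bound $V\le MN\rho^{-s}$ is far too weak.
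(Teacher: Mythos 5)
Your argument is correct in its overall structure and reaches the right conclusion, but the key step is carried out by a genuinely different device than the paper's. Both proofs share the same skeleton: the minimality trick giving $U(y^*)\ge \E_s^w(N,K)/N^2$ at the farthest point $y^*$ (your $(N-1)V\ge\E_s^w(N,K)$ is identical to the paper's \eqref{lower}--\eqref{Ulowbnd}), a lower bound $\E_s^w(N,K)\gtrsim N^{1+s/\alpha}$, and an upper bound on the potential at $y^*$ exploiting the hole of radius $\rho$ together with the separation bound $\delta\gtrsim N^{-1/\alpha}$ from Corollary~\ref{seps}. Where you diverge is in how that upper bound is obtained. The paper smears each point mass over a ball $B(x_i,r_0)$ with $r_0=\epsilon C_2N^{-1/\alpha}$, uses disjointness to dominate $U(y^*)$ by $\mu$-averages, and then integrates the Riesz kernel over $A\setminus B(y^*,(1-2\epsilon)\rho)$ via the distribution function, as in \eqref{mort}--\eqref{needed}. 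You instead count points directly: the disjointness of the balls $B(x_i,\delta/2)$ plus both regularity bounds give $\#\{i:\d(y^*,x_i)\le R\}\le C_0c_0(2R/\delta+1)^\alpha$, and a dyadic-shell summation (convergent because $s>\alpha$) yields $\sum_i\d(y^*,x_i)^{-s}\lesssim\rho^{\alpha-s}\delta^{-\alpha}$. Your route is more elementary (no integration, no $\epsilon$-optimization) and makes the mechanism transparent; the paper's route is tailored to producing explicit, tunable constants in terms of the local regularity constants $c_0(r)$, $C_0(r)$ and the parameters $\epsilon_0$, $\tau$, which it needs later for the $s\to\infty$ bootstrapping in Theorem~\ref{QUBP}. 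Likewise, your pigeonhole/covering derivation of the energy lower bound is a valid, standard alternative to the paper's Lemma~\ref{lower-control} (which uses nearest-neighbor distances, Jensen, and the harmonic--arithmetic mean inequality), and it conveniently avoids the paper's step of extending $w$ from $K$ to $A$.

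One small repair is needed to cover all $N\ge 2$ as the theorem claims. Your energy lower bound requires the pigeonhole scale $t\asymp N^{-1/\alpha}$ to satisfy $2t\le\kappa$, where $\kappa$ is the radius of the diagonal neighborhood on which $w\ge w_0$ (cf.\ \eqref{etakappa}); for small $N$ this can fail, and indeed $\E_s^w(N,K)$ can even vanish for small $N$ if $w$ vanishes off that neighborhood and $K$ admits well-separated $N$-point configurations. The paper handles this with the threshold $N_0$ of Lemma~\ref{lower-control} and the extra term $\diam(A)N_0^{1/\alpha}$ in the definition \eqref{C3def} of $C_3$; you should likewise restrict your comparison to $N\ge N_0$ and absorb $2\le N<N_0$ into the constant via the trivial bound $\rho(\omega_N^*,K)\le\diam(K)\le\diam(K)\,N_0^{1/\alpha}N^{-1/\alpha}$. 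With that adjustment your proof is complete.
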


Theorem~\ref{let} substantially extends a result of \cite{maym} that holds for unweighted energy minimizing point configurations when $K\subset  \R^p$ is restricted to be
the finite union of bi-Lipschitz images of compact sets in $\R^d$.

We remark that for $K$ and $A$ as in Theorem~\ref{let}, the set $K$ need not   inherit the lower $\alpha$-regularity of $A$.
However, since ${\mu}(K)>0$, we do have that $K$   is an upper $\alpha$-regular metric space and, consequently,  there is a constant $\tilde{c}_K>0$ such that \eqref{meshupbnd}
holds with $A$ replaced by $K$.
 Hence,  the inequality~\eqref{mesh-1} has the best possible order  with respect to $N$.

Taking $w\equiv 1$ in Theorem~\ref{let} immediately yields the following.
\begin{corollary}\label{lucky} Let $A$ be a compact, $\alpha$-regular  metric space
with   respect to the measure $ \mu$ and let $K\subset  A$  be a compact set of positive $\mu$-measure.  Then there exists a constant $C_4$ such that
$$\rho_N(K)\leq C_4\, N^{-1/\alpha}\qquad (N\ge 2).$$

\end{corollary}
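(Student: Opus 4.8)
The plan is to obtain this as an immediate consequence of Theorem~\ref{let} by specializing to the constant weight $w\equiv 1$. First I would check that $w\equiv 1$ qualifies as a bounded SLP weight on $K$: it is symmetric, it is continuous and hence lower semi-continuous on $K\times K$, it equals $1>0$ on the diagonal $D(K)$, and it is trivially bounded. Then I would fix any $s>\alpha$ (for definiteness, say $s=\alpha+1$). Since $K$ is compact and the $(s,1)$-energy functional is lower semi-continuous, the existence argument recalled in the Introduction for $A$ applies verbatim to $K$, so for each $N\ge 2$ there is an $N$-point $(s,1)$-energy minimizing configuration $\omega_N^*\subset K$.

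Theorem~\ref{let}, applied with this weight and exponent, then yields $\rho(\omega_N^*,K)\le C_3\,N^{-1/\alpha}$ for all $N\ge 2$, where $C_3$ is the constant furnished by that theorem (now with $w\equiv1$, and depending only on $\alpha$, $s$, $\mu$, and $K$). By definition, $\rho_N(K)$ is the minimum of $\rho(\omega_N,K)$ over all $N$-point configurations $\omega_N\subset K$; since $\omega_N^*$ is one such configuration, $\rho_N(K)\le \rho(\omega_N^*,K)\le C_3\,N^{-1/\alpha}$. Setting $C_4:=C_3$ completes the argument.

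I do not anticipate any genuine obstacle: the entire content lies in verifying that $w\equiv1$ meets the hypotheses of Theorem~\ref{let} and in the trivial observation that the minimum defining $\rho_N(K)$ is bounded above by the mesh norm of any particular configuration. The one point worth flagging is that $K$ itself need not be lower $\alpha$-regular; but Theorem~\ref{let} is stated precisely for a compact set $K$ of positive $\mu$-measure sitting inside an $\alpha$-regular ambient space $A$, so no additional regularity of $K$ is needed, and (as remarked after Theorem~\ref{let}) the order $N^{-1/\alpha}$ is in fact optimal because $K$ is upper $\alpha$-regular.
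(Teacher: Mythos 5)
Your proposal is correct and is exactly the paper's argument: the paper obtains Corollary~\ref{lucky} by "taking $w\equiv 1$ in Theorem~\ref{let}," and your verification that $w\equiv 1$ is a bounded SLP weight, together with the observation that $\rho_N(K)\le\rho(\omega_N^*,K)$ for any particular minimizing configuration, fills in precisely the routine details the paper leaves implicit.
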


\medskip

Combining   Corollary~\ref{seps}  and  Theorem~\ref{let} we obtain our main result.

\begin{theorem}\label{mainResult} Let $A$ be a compact,   $\alpha$-regular  metric space
with   respect to the  measure $\mu$ and let $K\subset  A$  be a compact set of positive $\mu$-measure.  Furthermore, let  $w$ be a bounded SLP weight on $K$,  and for $s>\alpha$ and $N\geq2$,
    let  $\omega_N^*$ be an $N$-point $(s,w)$-energy minimizing configuration on $K$.
Then $\{\omega_N^*\}_{N=2}^\infty$ is quasi-uniform on $K$.
\end{theorem}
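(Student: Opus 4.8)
The plan is to combine the two quantitative estimates already established in the excerpt. By Corollary~\ref{seps} applied to $K$ — which is an $\alpha$-regular metric space in its own right when... wait, it need not be; but Corollary~\ref{seps} only requires \emph{upper} $\alpha$-regularity of the ambient space, and as noted after Theorem~\ref{let}, $K$ is upper $\alpha$-regular because $\mu(K)>0$. Hence there is a constant $C_2>0$, independent of $N$, such that $\delta(\omega_N^*)\ge C_2 N^{-1/\alpha}$ for all $N\ge 2$. On the other hand, Theorem~\ref{let} applies directly (it is stated for exactly this setup: $A$ compact and $\alpha$-regular, $K\subset A$ compact with $\mu(K)>0$, $w$ a bounded SLP weight, $s>\alpha$), giving a constant $C_3<\infty$, independent of $N$, with $\rho(\omega_N^*,K)\le C_3 N^{-1/\alpha}$ for all $N\ge 2$.

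Dividing these two bounds yields
\begin{equation*}
\gamma(\omega_N^*,K)=\frac{\rho(\omega_N^*,K)}{\delta(\omega_N^*)}\le \frac{C_3 N^{-1/\alpha}}{C_2 N^{-1/\alpha}}=\frac{C_3}{C_2}\qquad (N\ge 2),
\end{equation*}
so the mesh ratio is bounded by the constant $C_3/C_2$, independent of $N$. By the definition of quasi-uniformity given in the Introduction, this means $\{\omega_N^*\}_{N=2}^\infty$ is quasi-uniform on $K$, which is the assertion of Theorem~\ref{mainResult}.

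The only genuine point requiring care is the invocation of the separation bound: Theorem~\ref{separation} and Corollary~\ref{seps} are phrased for energy minimizers on an upper $\alpha$-regular space, so one must observe that the minimization here takes place over configurations in $K$, that $K$ with the restricted metric and the restricted measure $\mu|_K$ is itself a compact upper $\alpha$-regular metric space (with $\mu(K)>0$ and the same upper regularity constant $C_0$), and that $w$ restricted to $K$ is still a bounded SLP weight. With these identifications, an $N$-point $(s,w)$-energy minimizing configuration on $K$ is precisely an energy minimizer on the space $K$, and Corollary~\ref{seps} applies verbatim. There is no real obstacle beyond this bookkeeping — the substance of the theorem is entirely contained in Theorems~\ref{separation} and~\ref{let}, and Theorem~\ref{mainResult} is their formal corollary.
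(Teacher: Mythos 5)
Your proof is correct and is essentially identical to the paper's: the authors likewise obtain Theorem~\ref{mainResult} by combining the separation lower bound of Corollary~\ref{seps} (applied to $K$, which is upper $\alpha$-regular since $\mu(K)>0$) with the mesh-norm upper bound of Theorem~\ref{let}, and dividing the two estimates. Your added remark about restricting the metric, measure, and weight to $K$ is exactly the bookkeeping the paper leaves implicit.
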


We remark that there are  $\alpha$-regular sets  $A$  and values of $s<\alpha$   for which (unweighted) $(s,1)$-energy minimizing configurations
on  $A$  have a mesh-separation ratio that
goes to $\infty$ with $N$.  One such example given in \cite{BrauHS} is a `washer' $A$ obtained by revolving a certain rectangle about an axis parallel to one of its sides,
where it turns out that for $s<1/3$, the support of the limit distribution of the $(s,1)$-energy minimizing configurations on $A$ omits an open subset of $A$.  Also, for
the logarithmic energy which corresponds to  $s=0$, it is shown in \cite{HSS} that, for $w\equiv 1$, the support of the limit distribution of  the log-energy minimizing configurations on a torus in $\R^3$ is only supported on the  positive curvature portion of the torus, so that the mesh-separation ratio for such configurations is again unbounded as $N\to \infty$.
Examples also abound in one dimension.  For the logarithmic energy, it is well-known \cite[Sections 6.7 and 6.21]{Sze}
that for $A=[-1,1]$ and $w\equiv 1$ the minimum energy points are zeros of Jacobi orthogonal polynomials (together with
$\pm 1$) that have separation distance of precise order $1/N^2$ and mesh norm of precise order $1/N$, so that the mesh-separation ratio grows like $N$.

One of our main motivations for considering weighted minimum energy configurations is that for a large class of sets $A$ one can design a weight function $w$ so that a sequence of $N$-point $(s,w)$-energy  minimizing configurations have a specified
limiting  density on $A$ as $N\to \infty$.
The following result is a consequence of Theorem~\ref{mainResult} and~\cite[Corollary 2]{transactions}.
Recall that a set in $\R^p$ is {\em $d$-rectifiable} if it is the Lipschitz image of a bounded set in $\R^d$.

\begin{corollary}\label{wrho}
Let $d\leq p$ and $A\subset  \R^{p}$ be a compact, infinite set that is   $d$-rectifiable and lower $d$-regular with respect to $\mathcal{H}_d$  for some integer $d$.
Suppose $\sigma$ is a probability density on $A$ that is continuous almost everywhere with respect to $\HH$  and is bounded above and below by positive constants.  Let $s>d$ and $w:A\times A\to [0,\infty)$ be
  given by
  \begin{equation} \label{wrhodef}
w(x,y):=(\sigma (x)\sigma (y))^{-s/2d}.\end{equation}  For   $N\ge 2$, let $\omega_N^*$ be an $N$-point $(s,w)$-energy minimizing configuration on $A$.  Then $\{\omega_N^*\}_{N=2}^\infty$ is quasi-uniform on $A$ and the sequence of normalized counting measures associated with   the    $\omega_N^*$'s converges
  weak-star
   (as $N\to \infty$) to $\sigma\,  \mathrm{d}\mathcal{H}_d$.
\end{corollary}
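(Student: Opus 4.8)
The plan is to read off the quasi-uniformity of $\{\omega_N^*\}_{N=2}^\infty$ from Theorem~\ref{mainResult} and the limiting distribution from \cite[Corollary~2]{transactions}, once two preliminary points are settled: that under the hypotheses $A$ is itself a compact $d$-regular metric space, and that the weight $w$ of \eqref{wrhodef} is a bounded SLP weight on $A$. For the first point, $d$-rectifiability of $A$ forces $0<\mathcal{H}_d(A)<\infty$, and combined with the assumed lower $d$-regularity this makes the measure $\mu:=\mathcal{H}_d$ restricted to $A$ both upper and lower $d$-regular; thus $A$ is a compact $d$-regular metric space and the results of Section~\ref{mainResults} apply to $A$ itself, with $K=A$.

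For the second point, since $\sigma$ is bounded above and below by positive constants, so is the map $x\mapsto\sigma(x)^{-s/2d}$, and hence $w(x,y)=(\sigma(x)\sigma(y))^{-s/2d}$ is symmetric, bounded above and below by positive constants, and in particular positive on the diagonal. To secure lower semicontinuity I would replace $\sigma$ by its upper semicontinuous regularization $\bar\sigma(x):=\limsup_{y\to x}\sigma(y)$. Since $\sigma$ is continuous $\mathcal{H}_d$-a.e., $\bar\sigma$ agrees with $\sigma$ $\mathcal{H}_d$-a.e.\ and inherits the same two-sided positive bounds; and because $t\mapsto t^{-s/2d}$ is continuous and decreasing on $(0,\infty)$, the map $x\mapsto\bar\sigma(x)^{-s/2d}$ is lower semicontinuous, so its symmetric product $\bar w(x,y):=(\bar\sigma(x)\bar\sigma(y))^{-s/2d}$ is a bounded SLP weight. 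Replacing $w$ by $\bar w$ affects neither the hypotheses of the corollary nor its asserted limit measure, since $\sigma\,\mathrm{d}\mathcal{H}_d=\bar\sigma\,\mathrm{d}\mathcal{H}_d$, so I may assume that $w$ itself is a bounded SLP weight.

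Granting these reductions, the quasi-uniformity of $\{\omega_N^*\}_{N=2}^\infty$ on $A$ is exactly Theorem~\ref{mainResult} applied with the $d$-regular space $A$, the compact subset $K=A$ (of positive $\mu$-measure), the bounded SLP weight $w$, and the exponent $s>d$. For the limit distribution I would apply \cite[Corollary~2]{transactions}: as $A\subset\mathbb{R}^p$ is $d$-rectifiable with $0<\mathcal{H}_d(A)<\infty$, $s>d$, and $w$ is bounded and continuous $\mathcal{H}_d\times\mathcal{H}_d$-a.e., the normalized counting measures associated with the minimizers $\omega_N^*$ converge weak-star as $N\to\infty$ to the probability measure proportional to $(w(x,x))^{-d/s}\,\mathrm{d}\mathcal{H}_d(x)$. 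Since $w(x,x)=(\sigma(x)^2)^{-s/2d}=\sigma(x)^{-s/d}$ we get $(w(x,x))^{-d/s}=\sigma(x)$, and since $\sigma$ is a probability density with respect to $\mathcal{H}_d$ the proportionality constant is $1$; hence the limit measure is $\sigma\,\mathrm{d}\mathcal{H}_d$, as claimed.

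The final algebraic identification and the normalization are routine. The step I expect to demand the most care is the preliminary bookkeeping: verifying that the geometric hypotheses on $A$ genuinely deliver the full $d$-regularity required by Theorem~\ref{mainResult}, and that a density $\sigma$ which is merely $\mathcal{H}_d$-a.e.\ continuous yields a weight simultaneously admissible for Theorem~\ref{mainResult} (where an SLP, hence lower semicontinuous, weight is needed) and for \cite[Corollary~2]{transactions} — in particular, that passing to the upper semicontinuous representative of $\sigma$ does not change the limiting measure $\sigma\,\mathrm{d}\mathcal{H}_d$.
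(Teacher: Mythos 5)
Your overall route is exactly the paper's: the authors derive Corollary~\ref{wrho} in one line from Theorem~\ref{mainResult} together with \cite[Corollary~2]{transactions}, and your verification of the hypotheses, the identification of the limit density via $(w(x,x))^{-d/s}=\sigma(x)$, and the normalization are the intended argument. The genuine flaw is in your reduction to an SLP weight. Replacing $\sigma$ by its upper semicontinuous envelope $\bar\sigma$ changes $w$ on $(E\times A)\cup(A\times E)$ with $E=\{\sigma\neq\bar\sigma\}$; although $\mathcal{H}_d(E)=0$, a finite configuration can perfectly well put points in $E$, so $E_s^{\bar w}$ and $E_s^{w}$ are different functionals with, in general, different minimizers. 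The corollary is a statement about minimizers of the given $E_s^{w}$, and quasi-uniformity of $E_s^{\bar w}$-minimizers does not transfer to them, so ``I may assume that $w$ itself is a bounded SLP weight'' is not a legitimate step. The repair stays entirely inside the paper's machinery: lower semicontinuity of the weight enters the proofs of Theorems~\ref{separation} and~\ref{let} only (i) to guarantee existence of a minimizer and (ii) to produce the near-diagonal bound \eqref{etakappa}. Here (i) is part of the corollary's hypothesis (an $\omega_N^*$ is handed to you), and (ii) holds globally for the weight \eqref{wrhodef} with $\kappa=\diam(A)$ and $\eta=(\sup_A\sigma)^{-s/d}>0$ because $\sigma$ is bounded above; since $w$ is also bounded above ($\sigma$ is bounded below), every estimate in those proofs applies verbatim to any minimizer of the original $w$, and no regularization is needed.

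A second step deserves more care than you (or the paper's one-line derivation) give it: you assert that $d$-rectifiability combined with lower $d$-regularity makes $\mathcal{H}_d|_A$ upper $d$-regular as well, which is what Theorem~\ref{mainResult} (through Theorem~\ref{let}, whose proof uses upper and lower regularity of one and the same ambient measure) requires. That implication is false in general: for $d=1$ the compact connected set $([0,1]\times\{0\})\cup\bigcup_{n\ge2}\{1/n\}\times[0,\,n^{-1}(\log n)^{-2}]$ has finite length, hence is $1$-rectifiable, and is lower $1$-regular, yet $\mathcal{H}_1\bigl(A\cap B((0,0),r)\bigr)\gtrsim 1/\log(1/r)\gg r$ as $r\to0^+$, so upper $1$-regularity fails. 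The separation half survives (Corollary~\ref{seps} only needs some upper $d$-regular measure of positive mass, which Frostman's lemma supplies since $\mathcal{H}_d(A)>0$), but the mesh-norm half does not follow from the stated hypotheses without an additional argument or an added upper-regularity assumption. You should either supply that argument or flag the hypothesis explicitly rather than asserting the upgrade.
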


For $A$   an   infinite, compact, metric space
  and $s>0$, let $\omega_N^s$ be an $N$-point $(s,1)$-energy minimizing configuration on $A$.  Furthermore, let $\nu_N$ be a cluster point (in the product topology on $A^N$) of  $\omega_N^s$ as $s\to \infty$.  As we now show,  $\nu_N$ must be an {\em $N$-point best-packing configuration on $A$}, that is, $\delta(\nu_N)=\delta_N(A)$.  For this purpose,  let $\tilde\omega_N$ be an   $N$-point best-packing configuration on $A$.  Then we have
$$
\delta(\omega_N^s)^{-s}\le \mathcal{E}_s^1(N,A)\le E_s^1(\tilde{\omega}_N)\le N(N-1)\delta_N(A)^{-s},
$$
and so
$$
(N(N-1))^{-1/s}\delta_N(A)\le \delta(\omega_N^s) \le \delta_N(A),
$$
which gives
\begin{equation}\label{nuN}
\lim_{s\to \infty} \delta(\omega_N^s) =\delta_N(A).
\end{equation}
Since $\omega_N^{s_j}\to \nu_N$ for some subsequence $s_j\to \infty$, it follows from \eqref{nuN} and  continuity that $\delta(\nu_N)=\delta_N(A)$ and so $\nu_N$ is an  $N$-point best-packing configuration on $A$.

 In general, it is not true that a sequence of $N$-point best-packing configurations in $A$ is quasi-uniform on $A$ (e.g., if $A$ is the classical $(1/3)$-Cantor set in [0,1]  together with any point outside this interval).  However,  for $A$ as in Theorem~\ref{mainResult}, it turns out that by using $(s,1)$-energy minimizing configurations on $A$ and taking $s\to \infty$ we can construct a  sequence of $N$-point best-packing configurations in $A$ that is also quasi-uniform on
$A$.

\begin{theorem}\label{QUBP} Let $A$ be a compact,   $\alpha$-regular  metric space
with  respect to the measure $\mu$ and let $K\subset  A$  be a compact set of positive $\mu$-measure.
For $N\ge 2$,   let  $\nu_N$ be a cluster point of a family of $N$-point $(s,1)$-energy minimizing configurations on $K$ as $s\to \infty$.
Then $\{\nu_N\}_{N=2}^\infty$ is a sequence of $N$-point best-packing configurations on $K$ that   is also quasi-uniform on $K$.

Furthermore,   the mesh-separation ratios satisfy
\begin{equation}\label{bigger-dan}
\limsup_{N\to\infty}\gamma(\nu_N,K)\le  \ 2\left(\frac{\mu(A)}{\mu(K)}\right)^{1/\alpha}[c_0(0)\, C_0(0)]^{1/\alpha},
\end{equation}
where $c_0(0)$ and $C_0(0)$ are given in \eqref{localreg-0} for the set $A$.\footnote{{\em Added in proof:} In the manuscript [1], the first two authors together with A. Bondarenko have recently proved under more general conditions that the right-hand side of \eqref{bigger-dan} can be replaced by 1.}
\end{theorem}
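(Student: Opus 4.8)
The plan is to establish the two one-sided asymptotic estimates
$$\liminf_{N\to\infty}N^{1/\alpha}\delta(\nu_N)\ge\left(\frac{\mu(K)}{C_0(0)}\right)^{1/\alpha}\quad\text{and}\quad\limsup_{N\to\infty}N^{1/\alpha}\rho(\nu_N,K)\le 2\bigl(c_0(0)\,\mu(A)\bigr)^{1/\alpha},$$
and then to take the quotient. Fix $N\ge2$ and, as in the hypothesis, a sequence $s_j\to\infty$ with $\omega_N^{s_j}\to\nu_N$ in $K^N$, where $\omega_N^s$ denotes the chosen $(s,1)$-energy minimizing configuration on $K$. Since the maps $\mathbf{x}\mapsto\rho(\mathbf{x},K)$ and $\mathbf{x}\mapsto\delta(\mathbf{x})$ are $1$-Lipschitz on $K^N$, we have $\rho(\nu_N,K)=\lim_j\rho(\omega_N^{s_j},K)$ and $\delta(\nu_N)=\lim_j\delta(\omega_N^{s_j})$; by~\eqref{nuN} (with $K$ in place of $A$) the latter equals $\delta_N(K)$, so each $\nu_N$ is an $N$-point best-packing configuration on $K$, as was already observed above.

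For the separation estimate I would argue directly with $\delta_N(K)$. Fix $r^*\in(0,\diam(A)]$ and, for $N$ large enough that $r:=(\mu(K)/(C_0(r^*)N))^{1/\alpha}\le r^*$, pick a maximal $r$-separated subset $\omega$ of $K$. By maximality every point of $K$ lies within distance $r$ of $\omega$, so $K\subseteq\bigcup_{x\in\omega}B(x,r)$ and hence $\mu(K)\le\sum_{x\in\omega}\mu(B(x,r))\le|\omega|\,C_0(r^*)\,r^\alpha=|\omega|\,\mu(K)/N$; thus $|\omega|\ge N$, and any $N$ of these points form an $r$-separated configuration, giving $\delta_N(K)\ge r=(\mu(K)/C_0(r^*))^{1/\alpha}N^{-1/\alpha}$. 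Letting $r^*\to0^+$ yields the first displayed estimate. (Alternatively, one may combine $\delta(\nu_N)=\delta_N(K)\ge\delta(\omega_N^s)$ with Corollary~\ref{seps} applied to $K$ and then let $s\to\infty$ in~\eqref{C2bnd}.)

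For the mesh estimate I would apply Theorem~\ref{let} with the bounded SLP weight $w\equiv1$ to each $\omega_N^s$, obtaining $\rho(\omega_N^s,K)\le C_3(s)\,N^{-1/\alpha}$ for $s>\alpha$ and $N\ge2$, with $C_3(s)$ the constant of~\eqref{C3def}; passing to the limit along $s_j$ and using the Lipschitz property gives $\rho(\nu_N,K)\le\bigl(\limsup_j C_3(s_j)\bigr)N^{-1/\alpha}$. It remains to track $C_3$ as $s\to\infty$. On inspecting the proof of Theorem~\ref{let}, the dependence of $C_3$ on $s$ enters only through quantities converging as $s\to\infty$ — the summed geometric series $\sum_{k\ge0}2^{k(\alpha-s)}=(1-2^{\alpha-s})^{-1}$ coming from the dyadic-shell bound on $\sum_j m(y,x_j)^{-s}$ at a mesh-realizing point $y$, together with factors of the form $(\,\cdot\,)^{1/s}$ and $(1-\alpha/s)^{\pm1/\alpha}$ — each of which tends to $1$. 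Running that proof with the local regularity constants $c_0(r^*),C_0(r^*)$ in place of the global ones yields, for each $r^*>0$ and all large $N$, a bound $\rho(\omega_N^s,K)\le C_3(s,r^*)N^{-1/\alpha}$ with $\limsup_{s\to\infty}C_3(s,r^*)\le 2(c_0(r^*)\mu(A))^{1/\alpha}$; hence $\limsup_{N\to\infty}N^{1/\alpha}\rho(\nu_N,K)\le2(c_0(r^*)\mu(A))^{1/\alpha}$, and letting $r^*\to0^+$ gives the second displayed estimate.

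Taking the quotient of the two estimates,
$$\limsup_{N\to\infty}\gamma(\nu_N,K)=\limsup_{N\to\infty}\frac{\rho(\nu_N,K)}{\delta(\nu_N)}\le\frac{2(c_0(0)\mu(A))^{1/\alpha}}{(\mu(K)/C_0(0))^{1/\alpha}}=2\Bigl(\frac{\mu(A)}{\mu(K)}\Bigr)^{1/\alpha}\bigl[c_0(0)\,C_0(0)\bigr]^{1/\alpha},$$
which is~\eqref{bigger-dan}. In particular $\sup_N\gamma(\nu_N,K)<\infty$, so $\{\nu_N\}_{N=2}^\infty$ is quasi-uniform on $K$; together with the fact that each $\nu_N$ is an $N$-point best-packing configuration on $K$, this would complete the proof. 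I expect the main obstacle to be the last part of the third paragraph: verifying that the constant produced by the proof of Theorem~\ref{let}, after localization and taken in the correct order of limits (first $s\to\infty$ for fixed $N$, then $N\to\infty$), does not exceed $2(c_0(0)\mu(A))^{1/\alpha}$. Establishing quasi-uniformity alone is much easier, requiring only that $C_3(s)$ stay bounded for large $s$ along with the crude bound $\delta_N(K)\ge(\mu(K)/C_0)^{1/\alpha}N^{-1/\alpha}$.
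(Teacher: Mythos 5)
Your overall architecture matches the paper's: show $\nu_N$ is best-packing via \eqref{nuN}, bound $\delta(\nu_N)$ below and $\rho(\nu_N,K)$ above with constants involving the \emph{local} regularity constants, take the limit $s\to\infty$ for fixed $N$ first and $N\to\infty$ second, and divide. Your treatment of the separation half is genuinely different and is correct: since $\delta(\nu_N)=\delta_N(K)$, the maximal-$r$-separated-set covering argument gives $\delta_N(K)\ge(\mu(K)/C_0(r^*))^{1/\alpha}N^{-1/\alpha}$ directly, and letting $r^*\to0^+$ yields the constant $(\mu(K)/C_0(0))^{1/\alpha}$. This is simpler than the paper, which instead re-runs the proof of Theorem~\ref{separation} with an $N$-dependent radius $r_1(N)=C^{**}N^{-1/\alpha}$ (justified by an a priori mesh bound uniform in $s\ge2\alpha$) to produce a localized $C_2(N)$ and then sends $s\to\infty$ to get $\hat C_2(N)=(\mu(K)/C_0(r_1(N)))^{1/\alpha}$. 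Your route buys a cleaner derivation of the same constant without the bootstrap; the paper's route has the side benefit of producing explicit separation constants for the finite-$s$ minimizers themselves.

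The mesh half is where the real work lies, and there your argument is an unverified claim rather than a proof — as you yourself flag. Two concrete issues. First, your description of where the $s$-dependence of $C_3$ lives is not accurate: in the proof of Theorem~\ref{let} there is no dyadic-shell geometric series; the $s$-dependence enters through $\bigl[c_0(r_0)\beta_0\tilde C_0(\tau)/C_5\bigr]^{1/(s-\alpha)}$, and while $\beta_0^{1/(s-\alpha)}\to1$ and $\tilde C_0(\tau)^{1/(s-\alpha)}\to1$, the term $C_5^{-1/(s-\alpha)}$ does \emph{not} tend to $1$: with $C_5=\Lambda^{-1-s/\alpha}\,2^{-s}\,(c_0(r_2)\mu(A))^{-s/\alpha}$ from \eqref{C5def} one gets $C_5^{-1/(s-\alpha)}\to\Lambda^{1/\alpha}\cdot2\cdot(c_0(r_2)\mu(A))^{1/\alpha}$. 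The factor $2$ in \eqref{bigger-dan} is precisely $\lim_{s\to\infty}(2^s)^{1/(s-\alpha)}$, traceable to the $2^{-s}$ produced by the disjoint balls $B(x_i,\delta_i/2)$ in Lemma~\ref{lower-control}; your list of factors "each of which tends to $1$" omits this, so the computation producing your claimed bound has not actually been carried out. Second, the spurious factor $\Lambda^{1/\alpha}>1$ must be removed: the paper does this by taking $r_2=r_1(N)\sim N^{-1/\alpha}$ so that $M$ can be dispensed with and $\Lambda=1$; in your fixed-$r^*$ scheme you would instead keep $\Lambda>1$ fixed, obtain $\limsup_{N\to\infty}N^{1/\alpha}\rho(\nu_N,K)\le2\Lambda^{1/\alpha}(c_0(r^*)\mu(A))^{1/\alpha}$, and only then let $\Lambda\to1^+$ and $r^*\to0^+$. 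That order of limits is legitimate (the $\limsup$ in $N$ tolerates the $N_0=M\Lambda/(\Lambda-1)$ threshold blowing up), so your scheme can be made to work, but as written the step "$\limsup_{s\to\infty}C_3(s,r^*)\le2(c_0(r^*)\mu(A))^{1/\alpha}$" is exactly the content of the theorem's quantitative part and cannot be left as an assertion.
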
We note that the constant on the right-hand side of~\eqref{bigger-dan} is at least $2$
per~\eqref{localreg} and~\eqref{localreg-0}.
One can also establish an analogous result concerning the existence of quasi-uniform sequences of {\em weighted} best-packing configurations
(cf. \cite{BSH3}).  We leave this extension to the reader.

In comparison with \eqref{bigger-dan}, we remark that   one can construct examples of metric spaces $A$ having  $n$-point best-packing configurations with arbitrarily large mesh-separation ratio.

We conclude this section with further references to related results.
   Separation theorems for the case $s\le d=\dim_\mathcal{H}(A)$ have been established only for rather special sets and values of $s$.
Dahlberg  \cite{Da} proved that (unweighted) optimal $((p-2),1)$-energy configurations $\omega_N^*$ on $A$  are {\em well-separated} (i.e., they satisfy
$\delta(\omega_N^*)\ge C N^{-1/d}$ for some positive constant $C$) if
$A\subset  \R^p$ ($p\ge 3$) is a smooth $d=p-1$ dimensional closed surface in $\R^p$  that separates $\R^p$ into two components.
For the critical value $s=d$ and $A$ a $d$-rectifiable subset of a smooth $d$-dimensional manifold in $\R^p$, it is shown in \cite{transactions} that
the following weaker separation result holds
\begin{equation}\label{weaksep}
\delta(\omega_N^*)\ge C (N\log N)^{-1/d},
\end{equation}
for some positive constant $C$.

For the case that  $A=\mathbb{S}^d$, the $d$-dimensional unit sphere in $\R^{d+1}$,  well-separation was proved  in \cite{KSS} for the range of values
$d-1<s<d$ and further extended by Dragnev and Saff \cite{DS} to the range
$d-2<s<d$ with explicit estimates for the separation constant $C$.  Well-separation  for $s=d-2$ and $d\ge 3$ was established in \cite{maym}.

Thus,  for the important case  of $A=\mathbb{S}^2$ it is known that optimal $s$-energy configurations on $\mathbb{S}^2$ are
well-separated for all nonnegative values of $s\neq 2$ (well-separatedness for  $s=0$ was established in \cite{RSZ}; see also \cite{D}); for the critical value $s=2$, the only known separation results are of the weak form given in \eqref{weaksep}.

Much less is known with regard to covering (mesh norm) theorems  in the case that $s\le d$ (see \cite[Sec.~1.3]{Sch}).

\section{Proofs}

In the proofs we shall need that an SLP weight $w$ is bounded below in a neighborhood of the diagonal $D(A)$.
Indeed, the positivity and  lower semi-continuity of $w$ on $D(A)$  and the compactness of $A$ imply that there are positive numbers $\eta$ and $\kappa$ such that
\begin{equation}\label{etakappa}
w(x,y)\ge \eta \qquad (x,y\in A, \,  m(x,y)\le \kappa).
\end{equation}

\begin{proof}[Proof of Theorem~\ref{separation}]

The initial part of this   argument proceeds as in \cite{kuij}.  Let $N\ge 2$ be fixed and let $\omega_N^*=\{x_1,\ldots,x_N\}\subset  A$ be a fixed $(s,w)$-energy minimizing configuration in $A$. For  $x\in A$ and $1\leq i\leq N$, let
$$U_i(x):=\sum_{\substack{  j=1\\ j\not=i}}^N\frac{w(x,x_j)}{\d(x,x_j)^s}.$$
Since $\omega_N^*$ is a minimizing configuration we have the lower bound
\begin{equation}\label{minimum}U_i(x_i)\leq U_i(x)\textup{ for all } x\in A.\end{equation}

Fix    $r_1\le\diam(A)$  such that
\begin{equation}\label{r1def}
\bar{\mu}\left(\bigcup_{j=1}^NB(x_j,r_1)\right)\ge \bar{\mu}(A).\end{equation}The radius $r_1$ can clearly be chosen independent of $N$, for example $r_1=\diam(A)$, and we note for future reference that it suffices to take $r_1>\rho(\omega_N^*,A).$  For the rest of this proof we fix $r_1=\diam(A)$.

Now let $0<\theta<1$ and
define
\begin{equation}\label{r-0}
r_0:=\left( \frac{\theta\bar{\mu}(A)}{N\, C_0(r_1)}\right)^{1/\alpha},\end{equation}
 where $C_0(r_1)=C_0$ is the  upper regularity constant of $\bar{\mu}$ as in  \eqref{localreg}. We note that $r_0< r_1$
as can be seen from the fact that $\bar{\mu}(A)\leq C_0(r_1)r_1^\alpha.$

 For $B(x,r_0,r_1):=B(x,r_1)\setminus B(x,r_0)$,   let
$$D:=  \bigcup_{j=1}^NB(x_j,r_0,r_1).$$ Using  the upper regularity of $\bar{\mu}$ and
\eqref{r1def} we see that
$$\bar{\mu}(D)\geq \bar{\mu}(A)-\sum_{j=1}^N\bar{\mu}(B(x_j,r_0))\geq (1-\theta)\bar{\mu}(A)>0,$$ and thus by inequality~\eqref{minimum} we have
\begin{equation}\label{decomp2}U_i(x_i)\leq\frac{1}{\bar{\mu}(D)}\int_{D}U_i(x)\, d\bar{\mu}(x)\leq
\frac{1}{(1-\theta)\bar{\mu}(A)}\sum_{\substack{j=1\\j\not=i}}^N\int_{ B(x_j,r_0,r_1)}\frac{w(x,x_j)}{\d(x,x_j)^s}\, d\bar{\mu}(x).
\end{equation}

Applying H\"older's inequality with $1/q=1-1/p$  we  obtain
\begin{equation}\label{finished}U_i(x_i)\leq\frac{1}{(1-\theta)\bar{\mu}(A)}\sum_{\substack{  j=1 \\ j\not=i}}^N\|w(\cdot,x_j)\|_{L_p(\bar{\mu})}\left(\int_{ B(x_j,r_0,r_1)}\frac{1}{\d(x,x_j)^{sq}}\, d\bar{\mu}(x)\right)^{1/q}.\end{equation}
Converting the integral on the right-hand side of~\eqref{finished} to the appropriate integral of the distribution function, and noting that $sq>\alpha$ by assumption,   we have
\begin{align}\label{truncate}\int_{ B(x_j,r_0,r_1)}\frac{1}{\d(x,x_j)^{sq}}\, d\bar{\mu}(x)&=\int_0^\infty\bar{\mu}\left(\{x\in B(x_j,r_0,r_1): m(x_j,x)^{-sq}>t\}\right)dt
\\&\leq\int_{r_1^{-sq}}^{r_0^{-sq}}\bar{\mu}\left(B(x_j,t^{-1/sq})\right)\, dt\nonumber\\& \leq \frac{C_0(r_1)\, sq}{sq-\alpha}\, r_0^{\alpha-sq}\nonumber\\
&=\frac{C_0(r_1)\, sq}{sq-\alpha}\,\left(\frac{\theta\bar{\mu}(A)}{N \, C_0(r_1)}\right)^{1-(sq)/\alpha},\nonumber
\end{align}
which, combined with  \eqref{finished},    gives
\begin{align}\label{hahaha}
\begin{split}
U_i(x_i)&\leq \frac{  \|w\|_{p,\infty}}{(1-\theta)\bar{\mu}(A)}\left(\frac{C_0(r_1)\, sq}{sq-\alpha}\right)^{1/q}(N-1)\, \left(\frac{\theta\bar{\mu}(A)}{N \, C_0(r_1)}\right)^{1/q-s/\alpha}\\ &<  \frac{1}{\bar{\mu}(A)}\left( \frac{C_0(r_1) }{\bar{\mu}(A)}\right)^{s/\alpha}\, \left( \frac{\|w\|_{p,\infty}}{(1-\theta)\theta^{s/\alpha-1/q}}\right)\left(\frac{sq   \bar{\mu}(A)}{sq-\alpha}\right)^{1/q}N^{1/p+s/\alpha},\end{split}
\end{align}where $\|w\|_{p,\infty}:=\sup_{x\in A}\|w(\cdot,x)\|_{L_p(\bar{\mu})}<\infty$.

Choosing
\begin{equation}\label{theta-0}
\theta_0:=\frac{sq-\alpha}{sq-\alpha+\alpha q}  =\left(   {\frac{s}{\alpha}-\frac{1}{q}}\right) \left({ \frac{s}{\alpha}+\frac{1}{p}}\right)^{-1}<1,\end{equation}
which minimizes the
right-hand side of \eqref{hahaha} with respect to $\theta$, we obtain
\begin{equation}
U_i(x_i)\le c_1N^{s/\alpha+1/p},
\end{equation}
where after a bit of arithmetic we have  \begin{equation}\label{c1def}
c_1:=\|w\|_{p,\infty}\left( \frac{C_0(r_1) }{\bar{\mu}(A)}\frac{s/\alpha+1/p}{s/\alpha-1/q}\right)^{s/\alpha}\,  \left( \frac{ s/\alpha+1/p }{ \bar{\mu}(A)}\right)^{1/p}\left( {   s/\alpha}\right)^{1/q}.
\end{equation}
Next, select  the indices $1\leq i_s\not= j_s\leq N$ so that
$\delta(\omega_N^*)=\d(x_{i_s},x_{j_s})$  and let $\kappa$ and $\eta$ be as in \eqref{etakappa}.  If $\delta(\omega_N^*)\le \kappa$, then
\begin{equation}\label{first}\frac{\eta}{\delta(\omega_N^*)^s}\leq\frac{w(x_{i_s},x_{j_s})}{\d(x_{i_s},x_{j_s})^s}\leq U_{i_s}(x_{i_s})\leq c_1 N^{s/\alpha+1/p},\end{equation}and therefore
$$\delta(\omega_N^*)\geq\left(\frac{\eta}{c_1}\right)^{1/s}\, N^{-\frac{1}{\alpha}-\frac{1}{sp}}.$$
Hence, \eqref{prop-sep} holds with \begin{equation}\label{C1def}C_1:=\min\{\kappa,\ ({\eta}/{c_1})^{1/s}\}.\end{equation}
\end{proof}

We remark that for the case when $w\equiv 1$ and $p=\infty,$ we can take $\kappa=\infty$, $\eta=1,$ and so from
\eqref{C1def} we deduce the separation estimate
$$
\delta(\omega_N^*)\ge C_2N^{-1/\alpha}\qquad (N\ge 2),
$$
where
\begin{equation}\label{C2def}
C_2:=\left[\frac{\bar{\mu}(A)}{C_0(r_1)} (1-\alpha/s)\right]^{1/\alpha}(\alpha/s)^{1/s},\  r_1=\diam(A).
\end{equation}

For the proof of  Theorem~\ref{let}, we utilize the following.
\begin{lemma}\label{lower-control} Let $A$ be a compact, infinite,  lower $\alpha$-regular  metric space
with lower  $\alpha$-regular measure $\underline{\mu}$, $w: A\times A\rightarrow[0,\infty)$ be an SLP weight on $A$, and $s> \alpha$.   Then there exists a positive integer $N_0$ independent of $s$,  such that
\begin{equation}\label{lemma-1}\E_{s}^{w}(N,A)\geq  C_5  N^{1+s/\alpha}\qquad (N\ge N_0),
\end{equation}
where $C_5$ is a constant
independent of $N$ given below in \eqref{C5def}.

\end{lemma}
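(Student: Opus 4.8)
The plan is to bound $E_s^w(\omega_N)$ from below by $C_5N^{1+s/\alpha}$ for an \emph{arbitrary} $N$-point configuration $\omega_N=\{x_1,\dots,x_N\}\subset A$; this of course yields the same bound for $\E_s^w(N,A)$. Only the near-diagonal part of the energy will be used. By \eqref{etakappa} there are $\eta,\kappa>0$ with $w(x,y)\ge\eta$ whenever $m(x,y)\le\kappa$, so, writing $r^*:=\min\{\kappa,\diam(A)\}$ and $g(t):=\#\{(i,j):i\ne j,\ m(x_i,x_j)\le t\}$,
$$
E_s^w(\omega_N)\ \ge\ \eta\!\!\sum_{\substack{i\ne j\\ m(x_i,x_j)\le r^*}}\!\!\frac{1}{m(x_i,x_j)^s}\ =\ \eta\!\int_{(0,r^*]}\! t^{-s}\,dg(t)\ \ge\ \eta\,s\!\int_0^{r^*}\!g(t)\,t^{-s-1}\,dt,
$$
the last step being Riemann--Stieltjes integration by parts, where the boundary term at $0$ vanishes because $g\equiv 0$ on $[0,\delta(\omega_N))$.

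The main geometric input I would establish is a \emph{crowding estimate}: for every $t\in(0,\diam(A)]$ and every configuration, $g(t)\ge N^2t^\alpha/Q-N$, where $Q:=4^\alpha c_0\,\underline{\mu}(A)$. To prove it, take a maximal $(t/2)$-separated set $Y\subset A$; the balls $B(y,t/4)$, $y\in Y$, are pairwise disjoint, and lower $\alpha$-regularity gives $\underline{\mu}(B(y,t/4))\ge c_0^{-1}(t/4)^\alpha$, so summing forces $|Y|\le Q t^{-\alpha}$. By maximality the balls $B(y,t/2)$ cover $A$, so the associated Voronoi cells partition $A$ into at most $Q t^{-\alpha}$ Borel sets, each of diameter $\le t$. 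Any two $x_i$ in a common cell lie within distance $t$, so if the cells receive $m_1,m_2,\dots$ of the points, the power-mean (Cauchy--Schwarz) inequality gives $g(t)\ge\sum_\ell m_\ell(m_\ell-1)\ge (\sum_\ell m_\ell)^2/(Q t^{-\alpha})-N=N^2t^\alpha/Q-N$.

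Feeding the crowding estimate into the energy bound and keeping only the range $t\ge t_1:=(Q/N)^{1/\alpha}$, on which it is nonnegative, gives, provided $t_1\le r^*$,
$$
E_s^w(\omega_N)\ \ge\ \eta\,s\!\int_{t_1}^{r^*}\!\Bigl(\tfrac{N^2t^\alpha}{Q}-N\Bigr)t^{-s-1}\,dt\ =\ \frac{\eta\,\alpha}{s-\alpha}\,Q^{-s/\alpha}N^{1+s/\alpha}\ -\ \frac{\eta\,s\,(r^*)^{\alpha-s}}{(s-\alpha)\,Q}\,N^{2}\ +\ \eta\,(r^*)^{-s}N,
$$
after evaluating the elementary integral. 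Since $s/\alpha>1$, the first term dominates the second as $N\to\infty$, which yields \eqref{lemma-1} with a constant $C_5=C_5(s)>0$ independent of $N$.

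The one delicate point — and the thing I expect to require the most care — is that $N_0$ must be independent of $s$. As $s\to\alpha^+$ the subtracted $O(N^2)$ term carries a factor $(s-\alpha)^{-1}$ of the \emph{same} order of magnitude as the leading term, so it cannot simply be declared negligible for large $N$ uniformly in $s$; both terms must be kept. Grouping them (and discarding the last, nonnegative, term), positivity of the right-hand side reduces to the clean inequality $X^{u}\ge 1+u$ with $u:=s/\alpha-1>0$ and $X:=N(r^*)^{\alpha}/Q$, and since $\ln(1+u)<u$ this holds for \emph{every} $u>0$ as soon as $X\ge e$; hence it suffices to take $N_0:=\lceil e\,Q\,(r^*)^{-\alpha}\rceil$, which is independent of $s$, and $C_5$ is then the (strictly positive, $N$-free) value obtained at $N=N_0$. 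A cruder option that sidesteps this bookkeeping is to retain in the integral only the single dyadic band $t\in[2t_1,4t_1]$, on which $g(t)\ge(2^\alpha-1)N$; this gives $N_0=\lceil 16^\alpha c_0\underline{\mu}(A)(r^*)^{-\alpha}\rceil$ immediately, at the price of an extra factor of order $2^{-s}$ in $C_5$. Everything after the crowding estimate is routine calculus, so that estimate is the substantive step.
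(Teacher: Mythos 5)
Your proof is correct, but it takes a genuinely different route from the paper's. The paper keeps only the \emph{nearest-neighbor} term for each point: writing $\delta_i$ for the distance from $x_i$ to its nearest neighbor, it bounds $E_s^w(\omega_N)\ge\eta\sum_i\delta_i^{-s}$, converts this to $\eta\bigl(\sum_i\delta_i^\alpha\bigr)^{-s/\alpha}(N')^{1+s/\alpha}$ via Jensen's inequality and the harmonic--arithmetic mean inequality, and then bounds $\sum_i(\delta_i/2)^\alpha\le c_0\,\underline{\mu}(A)$ by packing the disjoint balls $B(x_i,\delta_i/2)$ with the lower regular measure. The constraint $w\ge\eta$ near the diagonal is handled by discarding the $M$ points with the largest nearest-neighbor distances, where $M$ is chosen (independently of $s$) so that $\delta_M(A)\le\kappa$, which forces $\delta_i\le\kappa$ for the surviving $N'=N-M$ points; $N_0$ is then just $M\Lambda/(\Lambda-1)$. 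You instead count \emph{all} near-diagonal pairs through the function $g(t)$, integrate by parts, and feed in a covering-number lower bound for $g(t)$ obtained from maximal separated sets, Voronoi cells and Cauchy--Schwarz. Your crowding estimate is a correct and standard packing argument, the integral evaluation checks out (the two contributions at $t_1$ do combine to the coefficient $\eta\alpha(s-\alpha)^{-1}Q^{-s/\alpha}$), and your treatment of the uniformity of $N_0$ in $s$ via $X^u\ge 1+u$ is exactly the right fix for the $s\to\alpha^+$ degeneracy, since $a-bN^{1-s/\alpha}$ is increasing in $N$ so the constant at $N=N_0$ serves for all larger $N$. The paper's argument is shorter and yields a cleaner explicit constant $C_5$; yours is heavier but proves strictly more (a quantitative bound on the number of pairs at every scale $t$, not just on nearest-neighbor distances), and your dyadic-band variant even exhibits the natural $2^{-s}$ decay of the constant that also appears in the paper's \eqref{C5def}.
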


\begin{proof}

Let $\kappa$ and $\eta$ be as in \eqref{etakappa} and let $0<r_2\le \kappa$.  Since $A$ is compact, there is some $M$ such that the $M$-point best-packing distance satisfies
\begin{equation}\label{delta-on-N}\delta_{M}(A)\le r_2.\end{equation}

Let $N>M$ and let $\omega_N=\{x_1,\ldots, x_N\}\subset  A$ be an arbitrary $N$-point configuration of distinct points.  For $1\le i\le N$, let $y_i\in\omega_N$ be a fixed nearest neighbor to $x_i$ in the configuration $\omega_N$, and  set
$$\delta_i:=\d(x_i,y_i)=\min_{\substack{1\leq j\leq N\\  j\not=i}}\d(x_i,x_j)>0.$$
We assume an ordering on $\omega_N$ so that $\delta_i\le \delta_{i+1}$ for $i=1,\ldots,N-1$.
We note that $\omega_N\setminus\{x_1,\ldots,x_{N-M}\}$ is of cardinality $M$ and thus for all $i\leq N':= N-M$ we have that $\delta_i\leq r_2\leq \kappa$.

The energy of $\omega_N$ then has the lower bound
\begin{align}\label{inequality-1}
\begin{split}
E_{s}^{w}(\omega_N)&\geq\sum_{i=1}^{N'}\frac{w(x_i,y_i)}{\delta_i^s}\ge\sum_{i=1}^{N'}\eta\left(\frac{1}{\delta_i^\alpha}\right)^{s/\alpha}
\ge  \eta\left(\sum_{i=1}^{N'}\frac{1}{\delta_i^\alpha}\right)^{s/\alpha}(N')^{1-s/\alpha}
\\ &\ge
\eta\left(\sum_{i=1}^{N'} \delta_i^\alpha\right)^{-s/\alpha}(N')^{1+s/\alpha}
=\eta 2^{-s}\left(\sum_{i=1}^{N'}\left(\frac{\delta_i}{2}\right)^\alpha\right)^{-s/\alpha}(N')^{1+s/\alpha}.\end{split}
\end{align}where the last inequality in the first line follows from  Jensen's inequality  and  the subsequent inequality   follows from the harmonic-arithmetic mean  inequality.

Let  $\Lambda>1$ and $N_0:=M\Lambda/(\Lambda-1)$.  Then $N'= N-M\ge \Lambda^{-1}N$ for $N\ge N_0$.
Noting that the balls $B(x_i,\delta_i/2)$ are pairwise disjoint, we may apply the lower regularity of $\underline{\mu}$ (with regularity constant $c_0(r_2)$) to  obtain
\begin{align}\label{last}
\begin{split}E_{s}^{w}(\omega_N)&\geq \eta 2^{-s}\left(c_0(r_2)\sum_{i=1}^{N'}\underline{\mu}\left(B(x_i,\frac{\delta_i}{2})\right)\right)^{-s/\alpha}(N')^{1+s/\alpha}\\
&\ge\frac{ \eta}{(2^\alpha\, c_0(r_2)\, \underline{\mu}(A))^{s/\alpha}}(N')^{1+s/\alpha}\\&
\ge\Lambda^{-1-s/\alpha}\frac{ \eta}{(2^\alpha\, c_0(r_2)\, \underline{\mu}(A))^{s/\alpha}}N^{1+s/\alpha}\end{split}
.\end{align}
Since \eqref{last} holds for arbitrary $N$-point configurations $\omega_N\subset  A$ with $N\ge N_0$,  we obtain that  \eqref{lemma-1} holds with
\begin{equation}
\label{C5def}
C_5:=\Lambda^{-1-s/\alpha}\, \eta\, 2^{-s}\,{( c_0(r_2)\,  \underline{\mu}(A))^{-s/\alpha}}.
\end{equation}
We remark
that $N_0$ depends on $\Lambda$ and $r_2$, but is independent of $s$.
 \end{proof}

\begin{proof}[Proof of Theorem~\ref{let}]  Appealing to the generality provided by Theorem~\ref{separation} and Lemma~\ref{lower-control},  we can substantially extend and improve upon the arguments used in the proof of Theorem 3.6 in \cite{maym}.

Let $\omega_N^*=\{x_1\ldots,x_N\}$ be an $N$-point $(s,w)$-energy minimizing  configuration for  the compact set $K$, and, for $y\in K$,   consider the  function
\begin{equation}\label{point}U(y):=\frac{1}{N}\sum_{i=1}^N \frac{w(y,x_i)}{\d(y,x_i)^s}.\end{equation}
For fixed $1\leq j\leq N$, the function $U(y)$ can be decomposed as
\begin{equation}\label{decomp}U(y)=\frac{1}{N}\frac{w(y,x_j)}{\d(y,x_j)^s}+\frac{1}{N}\sum_{\substack{i=1\\i\not=j}}^N\frac{w(y,x_i)}{\d(y,x_i)^ s},\end{equation}
and, since $\omega_N^*$ is a minimizing configuration on $K$, the point $x_j$  minimizes the sum over $i\not=j$ on the right-hand side of equation~\eqref{decomp}. Thus for each fixed $j$ and $y\in K$
\begin{align}\label{lower}
U(y)&\geq \frac{1}{N}\frac{w(y,x_j)}{\d(y,x_j)^ s}+\frac{1}{N}\sum_{\substack{i=1\\i\not=j}}^N\frac{w(x_j,x_i)}{m(x_j,x_i)^ s}.
\end{align}
Summing over $j$ gives
 \begin{align}
 N U(y)&\geq\frac{1}{N}\sum_{j=1}^N\frac{w(y,x_j)}{\d(y,x_j)^ s}+\frac{1}{N}\sum_{j=1}^N \sum_{\substack{i=1\\i\not=j}}^N\frac{w(x_j,x_i)}{m(x_j,x_i)^ s}
\\&= U(y)+\frac{1}{N}\E_{s}^{w}(N,K),
\end{align}
and thus
\begin{equation}
\label{Ulowbnd} U(y)\geq\frac{1}{N(N-1)}\E_{s}^{w}(N,K)\ge \frac{\E_{s}^{w}(N,K)}{N^2} \qquad (y\in K).\end{equation}

Since $K$ is compact, there exists a point $y^*\in K$ such that
\begin{equation}\label{y}\min_{1\leq i\leq N}\d(y^*,x_i)=\rho(\omega_N^*,K)=:\rho(\omega_N^*).
\end{equation}

Using the fact that a  function is lower semi-continuous if and only if it is the limit of an increasing sequence of continuous functions, it is not difficult to show that since $w$ is a bounded SLP weight on  $K$, it may be extended to a bounded SLP weight on $A$.
Then, by  Lemma~\ref{lower-control}, there are constants $N_0$ and $C_5>0$ such that
 \begin{equation}\label{EKlowbnd}\E_{s}^{w}(N,K)\ge \E_{s}^{w}(N,A)\geq  C_5  N^{1+s/\alpha}\qquad (N\ge N_0).
\end{equation}
We note that the constant $C_5$ of~\eqref{EKlowbnd} does not depend on $K$, but rather on $A$ (specifically on the lower regularity constant of $A$ and on $\mu(A)$) as well as on the extended weight $w$.

Since~\eqref{Ulowbnd} holds for the point $y^*$ of~\eqref{y}, we combine \eqref{Ulowbnd} with~\eqref{EKlowbnd} to obtain
\begin{equation}\label{lower-pot}U(y^*)\geq  \frac{\E_{s}^{w}(N,K)}{N^2} \ge C_5N^{s/\alpha-1} \qquad (N\ge N_0).
\end{equation}

Next we determine an upper bound for $U(y^*)$ using the  $\alpha$-regularity of the superset $A$.
Since $A$ is upper $\alpha$-regular, we see that $K$ is also because ${\mu}(K)>0$.  Hence,
 Corollary~\ref{seps} applied  to $K$ implies that  there is some $C_2>0$ such that $ \delta(\omega^*_N)\ge C_2N^{-1/\alpha}$ for   $N\ge 2$.
 We note that the constant $C_2$ here depends on  $K$, specifically  $\mu(K)$.

Let $\mathcal{N}$ consist of those $N\ge N_0$ such that
\begin{equation}\label{reverse}\rho(\omega_N^*)\geq \frac{C_2}{2}N^{-1/\alpha}.
\end{equation}
If $\mathcal{N}$ is empty (or finite) then we are done.   Assuming that $\mathcal{N}$ is nonempty,
let $N\in \mathcal{N}$ be fixed.

 For $0<\epsilon<1/2$, let
\begin{equation}\label{r0def}
r_0=r_0(N,\epsilon):=\epsilon\,  {C_2}N^{-1/\alpha}.
\end{equation} Note that any two of the balls $B(x_i,r_0)\subset  A$, for $1\leq i\leq N,$ do not intersect since $r_0<{\delta(\omega_N^*)}/{2}$.

 For any $x\in B(x_i,r_0)$, inequalities~\eqref{y} and \eqref{reverse} imply
\begin{align}\label{gooba}
\begin{split}
\d(x,y^*)&\leq \d(x,x_i)+\d(x_i,y^*)\leq r_0+\d(x_i,y^*)\\ &\leq  2\epsilon\, \rho(\omega_N^*) +\d(x_i,y^*)\leq (1+ 2\epsilon)\d(x_i,y^*).\end{split}
\end{align}

For fixed $1\leq i\leq N$, using ~\eqref{gooba} and  taking an average value on $B(x_i,r_0)$ we obtain
\begin{align}\label{mort}
\begin{split}\frac{w(x_i,y^*)}{\d(x_i,y^*)^s}&\leq \frac{ \|w\|_\infty(1+ 2\epsilon)^s}{ \mu(B(x_i,r_0))}
\int_{B(x_i,r_0)}\frac{d\mu(x)}{\d(x,y^*)^s}\\ &\leq  \frac{ \|w\|_\infty\,(1+ 2\epsilon)^s\,  c_0(r_0)}{ \,r_0^{\alpha}}
\int_{B(x_i,r_0)}\frac{d\mu(x)}{\d(x,y^*)^s},\end{split}
\end{align}
where $\|w\|_\infty$ denotes the sup-norm of $w$ on $A\times A$  and $c_0(r_0)$ is the localized constant of \eqref{localreg}  for the set $A$.

Inequality  \eqref{reverse} and definition \eqref{r0def} imply $2\epsilon \rho(\omega_N^*)\ge r_0$  and thus,
for $x\in B(x_i,r_0)$, we obtain
\begin{align}\label{gabba}
\begin{split}
\d(x,y^*) &\geq\d(x_i,y^*)-\d(x,x_i) \geq \d(x_i,y^*)-r_0\\ &\geq  \d(x_i,y^*)-  2\epsilon \, \rho(\omega_N^*)\geq (1- 2\epsilon)\rho(\omega_N^*) . \end{split}
\end{align}
Inequality~\eqref{gabba} implies $$\bigcup_{i=1}^NB(x_i, r_0)\subset  A\setminus B(y^*,(1- 2\epsilon)\rho(\omega_N^*)),$$and since the left-hand side is a disjoint union, averaging the inequalities of~\eqref{mort}   we
have
\begin{align}\label{nadamas}
\begin{split}U(y^*)&\leq \frac{ \|w\|_\infty\,(1+ 2\epsilon)^s\,  c_0(r_0)}{ N\,r_0^{\alpha}}
\sum_{i=1}^N\int_{B(x_i,r_0)}\frac{d\mu(x)}{\d(x,y^*)^s}\\
&\leq\frac{ \|w\|_\infty\,(1+ 2\epsilon)^s\,  c_0(r_0)}{ N\,r_0^{\alpha}}\int_{A\setminus B(y^*, (1- 2\epsilon)\rho(\omega_N^*))}\frac{d\mu(x)}{\d(x,y^*)^s}. \end{split}\end{align}
%


For fixed $\tau\geq1$ we define the radius $R(N):= \tau(1- 2\epsilon)\rho(\omega_N^*),$ and the constant
 \begin{equation}\label{empty-ness}\tilde{C}_0(\tau):=
 C_0(R(N))(1-\tau^{\alpha-s})+ C_0\tau^{\alpha-s}.\end{equation}Note that if $\tau=1$, then $\tilde{C}_0(1)=C_0$.  (We retain $\tau$ as a parameter in our estimates as an option for the reader to optimize $C_3$ for a fixed $s$.)
Now we break the integral on the right-hand side of~\eqref{nadamas} into two terms and proceed as in~\eqref{truncate} to obtain
\begin{align}\label{needed}\begin{split}
 \int_{A\setminus B(y^*, (1- 2\epsilon)\rho(\omega_N^*))}&\frac{d\mu(x)}{\d(x,y^*)^s}
 \\ &=  \int_{  B(y^*, (1- 2\epsilon)\rho(\omega_N^*),R(N))}\frac{d\mu(x)}{\d(x,y^*)^s}+\int_{A\setminus B(y^*, R(N))}\frac{d\mu(x)}{\d(x,y^*)^s}\\
 & \leq C_0(R(N))\int_{R(N)^{-s}}^{[(1-2\epsilon)\rho(\omega_N^*)]^{-s}}t^{-\alpha/s}dt
 +C_0\int_0^{R(N)^{-s}}t^{-\alpha/s}dt\\
& =\frac{\tilde{C}_0(\tau)}{(1-\alpha/s) (1-2\epsilon)^{s-\alpha}} \rho(\omega_N^*)^{\alpha-s}.
\end{split}\end{align}

It is convenient to define the quantity
\begin{equation}\label{epthetacoef}\beta(\epsilon):=\frac{\|w\|_\infty(1+2\epsilon)^s }
{(1-\alpha/s)(1-2\epsilon)^{s-\alpha}(\epsilon C_2)^\alpha},
\end{equation}
and we note that  for fixed $s>\alpha$ it is minimized as a function of $\epsilon$ for
\begin{equation}\label{correct-epsilon}\epsilon_0:=\frac{1}{2(2(s/\alpha)-1)}<\frac{1}{2},\end{equation}
with minimal value
\begin{equation}\label{minimal-beta}\beta_0:=\beta(\epsilon_0)= \frac{\|w\|_\infty}{(1-\alpha/s)^{s-\alpha+1}}\left(\frac{4s}{\alpha C_2}\right)^{\alpha}.\end{equation}

Using $\epsilon_0$ and combining inequality~\eqref{nadamas} with inequality~\eqref{needed}  we obtain
\begin{equation}\label{int2est}U(y^*)  \leq
c_0(r_0)\beta_0\tilde{C}_0(\tau) \rho(\omega_N^*)^{\alpha-s}.
\end{equation}
If $N\in \mathcal{N}$, then    \eqref{int2est} and \eqref{lower-pot}  imply
$$\rho(\omega_N^*)\le \left[\frac{c_0(r_0)\beta_0 \tilde{C}_0(\tau)}{ C_5}
  \right]^{1/(s-\alpha)} N^{-1/\alpha}. $$
If $N\not\in\mathcal{N}$, then either $N\le N_0$ or $\rho(\omega_N^*)<\frac{C_2}{2} N^{-1/\alpha}$.
Hence \eqref{mesh-1} holds with \begin{equation}\label{C3def}
C_3:= \max\left\{ \diam(A)N_0^{1/\alpha},\,\left[\frac{c_0(r_0)\beta_0 \tilde{C}_0(\tau)}{ C_5}
  \right]^{1/(s-\alpha)}, \frac{C_2}{2}\right\}.
\end{equation}
We note that if $N>N_0$, then it suffices to take
\begin{equation}\label{C3N0}C_3= \max\left\{\left[\frac{c_0(r_0)\beta_0 \tilde{C}_0(\tau)}{ C_5}
  \right]^{1/(s-\alpha)}, \frac{C_2}{2}\right\}
  \end{equation}

\end{proof}

\begin{proof}[Proof of Theorem~\ref{QUBP}]
Starting with Theorem~\ref{let}  we shall employ a bootstrapping argument whereby the constants $C_2,$ $C_5$, and subsequently $C_3$
are redefined so as to depend
on $N$.

We begin by noting that if $s\geq 2\alpha$, then the constant $C_3$ of~\eqref{C3def} has a uniform upper bound in $s$; indeed, with $\kappa=\infty, C_2$ as defined in \eqref{C2def} and $C_5$ as defined in \eqref{C5def} (with $\eta=1$), each of the
three terms appearing in braces in \eqref{C3def} is uniformly bounded above. Thus
there exists a constant $C^*$ independent of $N\geq2$ and of $s\geq2\alpha$ such that
$\rho(\omega_N^{(s)},K)< C^* N^{-1/\alpha},$  where $\omega_N^{(s)}$ is any $N$-point $(s,1)$-energy minimizing configuration on $K.$\

We next note that $C_0(0)$ of~\eqref{localreg-0} is finite and positive, and  utilizing the constant $c_A$ of~\eqref{seplowbnd}  we fix
\begin{equation}\label{good-constant}C^{**}:=\max\left\{C^*,\ c_A,\ \left(\frac{\mu(K)}{C_0(0)}\right)^{1/\alpha}\right\},
\end{equation}
and we now redefine the radius $r_1$ to be a function of $N$, \begin{equation} r_1(N):=C^{**}\,N^{-1/\alpha}\quad (N\geq2).\end{equation}
Returning to  the proof of Theorem~\ref{separation}, we note that $r_1(N)>\rho(\omega^{(s)}_N,K)$, and so inequality~\eqref{r1def} holds.  Furthermore, by the choice of $C^{**}$ we have that for  $0<\theta_0<1$ as in~\eqref{theta-0}
$$r_0(N):=\left(\frac{\theta_0\mu(K)}{N C_0(0)}\right)^{1/\alpha}<r_1(N).$$Taking $r_0=r_0(N)$ in the proof  and remembering that $q=1$ in the current context, we see that with $A$ replaced by $K$    the penultimate term on  right-hand side of~\eqref{truncate} becomes
$$\frac{sC_0(r_1(N))}{s-\alpha}\left(\frac{\theta_0\,\mu(K)}{N\,C_0(0)}\right)^{1-s/\alpha},$$
and thus
\begin{align}\begin{split}
 \int_{ B(x_j,r_0(N),r_1(N))}\frac{d\mu(x)}{\d(x,x_j)^s}&\leq \frac{sC_0(r_1(N))}{s-\alpha}\left(\frac{\theta_0\,\mu(K)}{N\,C_0(0)}\right)^{1-s/\alpha} \\  &\leq \frac{s}{s-\alpha}\left(\frac{\theta_0\,\mu(K)}{N}\right)^{1-s/\alpha}\,C_0(r_1(N))^{s/\alpha}, \end{split}\end{align}
where the last inequality follows from the fact that $C_0(0)\leq C_0(r_1(N))$ and $s>\alpha$.

For $w\equiv1$, the constant $C_2$ of \eqref{C2def} with $r_1=r_1(N)$ becomes
\begin{equation}\label{need-later}C_2(N):=\left(\frac{\alpha}{s}\right)^{1/s}\left(\frac{ 1-\alpha/s}{C_0(r_1(N))}\right)^{1/\alpha}\mu(K)^{1/\alpha},\end{equation}where $C_0(r_1(N))$ is the local upper regularity constant of~\eqref{localreg},
and we have
$$
\delta(\omega_N^{(s)})\ge C_2(N)N^{-1/\alpha}\qquad (N\ge 2, \, s\ge 2\alpha).
$$

Furthermore, allowing the radius $r_2$ appearing in~\eqref{delta-on-N} to depend on $N\geq 2$ by taking $r_2:=r_1(N),$   we see via~\eqref{seplowbnd} and~\eqref{good-constant} that $$r_1(N)\geq\delta_N(A) \quad (N\geq2),$$and there is no need to designate the integer $M$ in the proof of Lemma~\ref{lower-control}. Thus we can take $\Lambda=1$ in~\eqref{C5def}, and it follows (with $\eta=1$) that
$$ E^1_s(\omega_N^{(s)}) \geq C_5(N)N^{1+s/\alpha} \qquad (N\ge 2,\,\, s \geq 2\alpha),
$$
where

\begin{equation}\label{C5_N}C_5(N):=\frac{1}{2^s[c_0(r_1(N))\mu(A)]^{s/\alpha}}.\end{equation}

We remark that $C_2(N)$ clearly depends on the subset $K$, whereas $C_5(N)$ depends on the superset $A$.

We now return to the proof of Theorem~\ref{let} utilizing the constants $C_2(N)$ and $C_5(N)$.
For   $\beta_0$ as in~\eqref{minimal-beta}, we see that
$$ \rho(\omega^{(s)}_N,K) \leq C_3(N)N^{-1/\alpha} \,\,\,(N\geq N_0,\,\, s\geq2\alpha),$$
where $N_0$ is as in Lemma~\ref{lower-control}, and by~\eqref{C3N0} (choosing $\tau=1$, so that $\tilde{C}_0(\tau)=C_0$)
\begin{equation}\label{C3-Def}C_3(N):=\max\left\{ \left[\frac{c_0(r_0)\beta_0C_0}{ C_5(N)}
  \right]^{1/(s-\alpha)}, \frac{C_2(N)}{2} \right\}.\end{equation}


With equations~\eqref{need-later}-\eqref{C3-Def} in mind, we are ready to complete the proof of Theorem~\ref{QUBP}.  The   argument leading to equation~\eqref{nuN} shows that $\nu_N$ is an $N$-point best-packing configuration on $K$ for each $N\ge 2$.
We now need to determine the limits of the constants $C_2(N)$ of~\eqref{need-later} and $C_3(N)$ of~\eqref{C3-Def} as $s\to \infty$. Fixing $N$ in~\eqref{need-later}  yields
\begin{equation}\label{c2s}
\lim_{s\to\infty} C_2(N)=\left(\frac{\mu(K)}{C_0(r_1(N))}\right)^{1/\alpha}=:\hat{C}_2(N).
\end{equation}
Since $c_0(r_0)$ and $C_0$ are independent of $s$ and $\lim_{s\to\infty}\beta_0^{1/(s-\alpha)}=1$, it  follows, that for fixed $N$
\begin{align}\label{too-live-crew}
\begin{split}
\lim_{s\to\infty} C_3(N)&=  \max\left\{ \frac{\hat{C}_2(N)}{2},\, \lim_{s\to\infty}C_5(N)^{1/(\alpha-s)}\right\}\\ &=\max\left\{\frac{1}{2}\left(\frac{\mu(K)}{C_0(r_1(N))}\right)^{1/\alpha},\, 2[c_0(r_1(N))\mu(A)]^{1/\alpha}\right\}\\&:=\hat{C}_3(N)\end{split}.
\end{align}

From the continuity of $\delta(\cdot)$ and $\rho(\cdot,K)$ on $K^N$ we deduce that
$$\delta(\nu_N)\ge \hat{C}_2(N) N^{-1/\alpha} \quad \text{and} \quad \rho(\nu_N,K)\le \hat{C}_3(N)N^{-1/\alpha} \qquad (N\ge N_0).$$
Taking the ratio of these two quantities we have that
\begin{equation}\label{doggone}\frac {\rho(\nu_N,K)}{\delta(\nu_N)}\leq\frac{\hat{C}_3(N)}{\hat{C}_2(N)}=\max\left\{\frac{1}{2},\,2\left(\frac{\mu(A)}{\mu(K)}\right)^{1/\alpha}[c_0(r_1(N))\,C_0(r_1(N))]^{1/\alpha}\right\}, \end{equation}
and hence for $N\geq N_0$
\begin{align}
\limsup_{N\to\infty}\frac {\rho(\nu_N,K)}{\delta(\nu_N)}&\leq \max\left\{\frac{1}{2},\,2\left(\frac{\mu(A)}{\mu(K)}\right)^{1/\alpha}[c_0(0)\,C_0(0)]^{1/\alpha}\right\}\\
&=2\left(\frac{\mu(A)}{\mu(K)}\right)^{1/\alpha}[c_0(0)\,C_0(0)]^{1/\alpha}<\infty.
\end{align}
Therefore,
the sequence of configurations $\{\nu_N\}_{N=2}^\infty$ is quasi-uniform on $K$.
\end{proof}

\noindent
{\bf Acknowledgments.} We thank the referees for their careful reading and detailed comments.

\end{document}